
\documentclass[a4paper,10pt,reqno]{amsart}
\usepackage[utf8]{inputenc}
\usepackage[%
    hyperref,backend=biber, doi=false,url=false,
    sorting=none,style=numeric-comp,backref=true, maxbibnames=7,giveninits=true]{biblatex} 
\renewbibmacro{in:}{}

\usepackage{amsmath}
\usepackage{amssymb}
\usepackage{bbm}
\usepackage{bm}
\usepackage{color}
\usepackage{chngcntr}
\numberwithin{equation}{section}
\usepackage[foot]{amsaddr}

\usepackage{mathrsfs}
\usepackage{graphicx}
\usepackage{epstopdf}
\usepackage{hyperref}
\hypersetup{colorlinks=true,linktoc=page,linkcolor=blue,citecolor=red,urlcolor=cyan}


\newcommand{\old}[1]{ \ifthenelse{\equal{}{veruno}}{\textcolor{magenta}{#1}}{#1} }
\newcommand{\colred}[1]{\textcolor{red}{(\sc #1)}}

\bibliography{bibliografia}

\usepackage[english]{babel}


\newcommand{\hu}{\hspace{1cm}}

\newcommand{\perpe}[1][x]{{#1}_d}
\newcommand{\dime}{D}
\newcommand{\dk}[1][k]{\frac{{\rm d} #1}{(2\pi)}}
\newcommand{\dx}[1][x]{{{\rm d} #1}}

\newcommand{\mathi}{{\operatorname{i}}}
\newcommand{\Tr}{\operatorname{Tr}}

\newcommand{\der}{D}
\newcommand{\genlaplace}{\mathcal{D}_0}
\newcommand{\hkf}{H}
\newcommand{\ope}{\mathcal{D}}
\newcommand{\Kope}{K_{\ampli,V}}

\newcommand{\ampli}{\lambda}

\newcommand{\Kdelta}[1][]{K_{\delta}}
\newcommand{\Kzeta}[1][\ampli]{K_{#1}}

\newcommand{\pos}[1][]{L_{#1}}

\newcommand{\hidepdo}[2]{\ifthenelse{\equal{jpa}{}}{#1}{#2}}

\usepackage{xparse}

\NewDocumentCommand{\dkpara}{O{k} O{\dime}}{\frac{{\rm d} {#1}^{\parallel}}{(2\pi)^{#2}}}
\NewDocumentCommand{\dkd}{O{k} O{\dime}}{ \frac{{\rm d}^{#2} {#1}}{(2\pi)^{#2}}}
\NewDocumentCommand{\dxd}{O{x} O{\dime}}{ {\rm d}^{#2} {#1} }
\NewDocumentCommand{\ad}{O{} O{}}{ \bigl \langle {#1}\bigr\rangle_{{\rm ad}{#2}} }

\makeatletter
\def\@tocline#1#2#3#4#5#6#7{\relax
  \ifnum #1>\c@tocdepth 
  \else
    \par \addpenalty\@secpenalty\addvspace{#2}%
    \begingroup \hyphenpenalty\@M
    \@ifempty{#4}{%
      \@tempdima\csname r@tocindent\number#1\endcsname\relax
    }{%
      \@tempdima#4\relax
    }%
    \parindent\z@ \leftskip#3\relax \advance\leftskip\@tempdima\relax
    \rightskip\@pnumwidth plus4em \parfillskip-\@pnumwidth
    #5\leavevmode\hskip-\@tempdima
      \ifcase #1
       \or\or \hskip 2em \or \hskip 4em \else \hskip 6em \fi%
      #6\nobreak\relax
    \dotfill\hbox to\@pnumwidth{\@tocpagenum{#7}}\par
    \nobreak
    \endgroup
  \fi}
\makeatother

\title[Resummed heat-kernel and form factors for surface contributions]{Resummed heat-kernel and form factors for surface contributions: Dirichlet semitransparent boundary conditions}

\author{S.~A.~Franchino-Vi\~nas$^{1,2,3}$}

\address{$^1$ Helmholtz-Zentrum Dresden-Rossendorf, Bautzner Landstraße 400, 01328 Dresden, Germany.}

\address{$^2$ Departamento de F\'isica, Facultad de Ciencias Exactas,
Universidad Nacional de La Plata, C.C.\ 67 (1900), La Plata, Argentina.}

\email{$^3$ \href{mailto:s.franchino-vinas@hzdr.de}{s.franchino-vinas@hzdr.de} }

\newtheorem{comm}{Remark}
\newtheorem{theorem}{Theorem}
\newtheorem{proposition}{Proposition}
\newtheorem{corollary}{Corollary}

\newtheorem{definition}{Definition}
\begin{document}

\begin{abstract}
In this article we consider resummed expressions for the heat-kernel's 
trace of a Laplace operator, the latter including a potential and imposing Dirichlet semitransparent boundary conditions on a surface of codimension one in flat space. 
We obtain resummed expressions that correspond to the first and second order expansion of the heat-kernel in powers of the potential.
We show how to apply these results to obtain the bulk and surface form factors  of a scalar quantum field theory in $d=4$ with a Yukawa coupling to a background. \hidepdo{}{A characterization of the form factors in terms of pseudo-differential operators is given.}\old{Additionally, we discuss a connection between heat-kernels for Dirichlet semitransparent, Dirichlet and Robin boundary conditions. }
\end{abstract}

\maketitle

\section{Introduction}\label{sec:introduction}
The relation between spectral functions and the quantum theory of fields has been a\old{close}one~\cite{DeWitt:1965,Birrel:1982}, 
specially when considering external background fields, including electromagnetic fields~\cite{Schwinger:1951nm} and curved spacetimes~\cite{Dowker:1975tf,Hawking:1976ja, Gibbons:1977dr}.
In particular, the (one-loop) quantum fluctuations are usually given in terms of operators of Laplace type,
whose heat-kernels determine the one-loop effective action~\cite{Peskin:1995ev}. 

Recall that, given a  Laplace-type operator $\genlaplace$ defined on a real manifold $M$ of dimension $m$, with or without boundary and the corresponding local boundary conditions, 
its heat-kernel (HK) is defined as $K(T;\genlaplace):=e^{-T \genlaplace}$.
The theory establishes that under general conditions of smoothness of both the operator and the manifold, 
as $T\downarrow 0$ the trace of the HK
possesses the following expansion~\cite{Greiner:1971xe,Gilkey:1975iq,Gilkey:1995,Kirsten:2001wz}:
\begin{align}\label{eq:hk_expansion}
 K(T;f;\genlaplace):= \Tr \Big(f(x) K(x,y; T;\genlaplace)\Big)= \sum_{n=0} a_n(f;\genlaplace) T^{(n-m)/2}.
\end{align}
The function $f$ is called smearing function; 
its role is to give a precise mathematical meaning to terms that otherwise would be divergent (in physical terms, it acts as a regulator).
The coefficients $a_n(f;\genlaplace)$ are called Gilkey--Seeley--DeWitt (GSDW) coefficients~\cite{DeWitt:1965,Seeley:1967ea,Gilkey:1975iq}
and sometimes HAMIDEW, after Hadamard--Minakshisundaram--DeWitt~\cite{Gibbons:1979}.
They consist of volume and surface integrals, respectively over the bulk and the boundary of $M$, of local invariants (including the smearing function).
One can build them by considering a linear combination of all the possible invariants with the appropriate dimensions; 
the numerical coefficients in front of each single term are universal, 
i.e. independent of the problem at hand.
Additionally, it can be shown that 
the HK corresponds to the solution of the following heat equation with initial localized conditions:
\begin{align}\label{eq:heat-equation}
 \begin{split}
  (\partial_T+ \genlaplace)K(x,y;T;\genlaplace)=0,\quad K(x,y;\downarrow 0;\genlaplace)= \delta(x,y).
 \end{split}
\end{align}

In the last decades, the techniques in the computation of the GSDW coefficients have shown several advancements. 
During this period, the community has recognized the benefits of the joint use of index theorems and functorial techniques, 
together with the consideration of special cases~\cite{Vassilevich:2003xt}.

Another important milestone 
has been the\old{obtainment}of partially resummed HK expansions.
For example, it has been proved that, in the so-called covariant perturbation theory, one can resum all the derivatives acting on contributions to second~\cite{Barvinsky:1987uw,Barvinsky:1990up,Codello:2012kq} and third order in the curvatures~\cite{Barvinsky:1990uq,Barvinsky:1993en,Barvinsky:1994ic,Barvinsky:1994hw} (see also rederivations and physical consequences in~\cite{Avramidi:1990ap,Avramidi:1990je}).
Other studied scenarios include resummations in abelian bundles~\cite{Avramidi:2009quh}, \old{QED~\cite{Gusynin:1998bt}}, symmetric spaces \cite{Avramidi:1994zp, Avramidi:1995ik} and powers of the curvature~\cite{Parker:1984dj,Jack:1985mw}; see Ref.~\cite{Avramidi:1997jy} for additional considerations\old{and~\cite{Wigner:1932eb,Fujiwara:1981rf} for a related computation by Wigner.}

In the above-mentioned developments, much more attention has been given to the case of manifolds without boundaries, 
being the study of HKs in manifolds with boundaries much less developed; 
a not exhaustive list of works which deal with the latter problem include Refs.~\cite{Vassilevich:2003xt,Kirsten:2004qf,Esposito:2004ts,Esposito:2005dn,McAvity:1990we,McAvity:1991xf,McAvity:1992fq, Barvinsky:2005ms, Barvinsky:2009cb} and references therein.

Trying to bring more balance into this scenario, in the present manuscript we will show how to obtain resummed expressions when Dirichlet semitransparent conditions on a flat surface are considered.
According to the best of our knowledge, 
this is the first time that resummations for surface contributions of bulk quantities are studied.

The Dirichlet semitransparent boundary condition, also known as transmittal boundary condition \cite{Gilkey:2001mj}, is equivalent to the introduction of a delta potential with support on a surface of codimension one~\cite{Franchino-Vinas:2020okl, Asorey:2004kk, Albeverio:1988}. 
The problem at hand has been widely studied in the realm of quantum field theories (QFTs); the interested reader may consult Refs.~\cite{Fosco:2019lmw, Fosco:2017jjf, Parashar:2017sgo, Fosco:2011rm, Fosco:2007nz, Milton:2007gy, Bordag:2004rx, Graham:2002xq, Moss:2000gv, Frolov:1999bi, Bordag:1998vs, Solodukhin:1997xn}. 
Moreover, it has also been considered as a problem in a 
first quantization~\cite{Grosche:1990um, Lawande:1988, Manoukian:1989, Farias:1980, Reading:1972, Crandall:1993, Goovaerts:1985,Cacciari:2006, Cacciari:2007}. 
Recently, it has attracted much attention in connection with the related $\delta'$ problem, 
see Refs.~\cite{Cavero-Pelaez:2020pxy, Munoz-Castaneda:2020wif, Bordag:2019vrw} for $\delta-\delta'$ potentials 
and Ref.~\cite{Ahmadiniaz:2022bwy} for some generalizations of semitransparent boundary conditions.

To be  precise, in the following we will thus be interested in an operator of Laplace type in $d$-dimensional flat Euclidean space $\mathbb{R}^d$, 
whose potential is given by a sufficiently smooth function $V$, summed to a Dirac delta function with support in a\old{flat}surface of codimension one (chosen without loss of generality as the $\perpe[x]=L$ plane):
\begin{align}\label{eq:operator_D}
 \ope :&= -\partial^2+ \lambda \delta(\perpe[x]-\pos) - \zeta V(x).
\end{align}
Regarding its HK and physical applications, we may summarize the main results of the present article as follows:
\begin{itemize}
 \item Proposition~\ref{prop:integral_equation}, in Sec.~\ref{sec:preliminary},
 provides an integral equation for the relevant HK, from which an expansion
 in powers of the potential $V$ is derived in Corollary~\ref{cor:perturbative}.
 \item Theorem~\ref{th:firs_order}, in Sec.~\ref{sec:first_order}, gives a closed expression for the resummed-in-$\lambda$ HK at first order in $V$ and in $d=1$, 
 together with the corresponding GSDW coefficients.
 \item Theorem~\ref{th:quadratic}, Sec.~\ref{sec:second_order}, concerns the second order in $V$ and resummed-in-$\lambda$ contribution to the HK in $d=1$; 
the corresponding GSDW coefficients are also listed.
 \item In Sec.~\ref{sec:qft}, we apply the previous results to a scalar quantum field theory in $d$ dimensions, 
 including a Yukawa coupling to a background field. We show how the resummed expansions may be applied to obtain the relevant form 
 \hidepdo{factors.}{factors, 
 which are shown to be  pseudo-differential operators with symbols in $S^j$ for all $j>0$ (in the classification of 
 L.~Hörmander~\cite{Hoermander:LPDOIII}).}
\end{itemize}

\old{ In order to derive these results, we will require the potential in Eq.~\eqref{eq:operator_D} to be positive and $V$ to decay rapidly enough at infinity.
On the one hand, positivity can be guaranteed if one chooses a coupling $\lambda\geq0$ and enforces $\zeta V(x)<~0$; 
otherwise, negative eigenvalues would threaten the definition of the Euclidean quantum field theory in Sec.~\ref{sec:qft} by creating instabilities.
This requirement can be left aside in Secs.~\ref{sec:preliminary}, \ref{sec:first_order} and \ref{sec:second_order},
where we are just interested in the expression of the heat-kernel on the diagonal.
On the other hand, the decay of $V$ should be understood as enabling an interpretation of the potential as a perturbation of the Laplacian-plus-delta operator.
}

\section{A perturbative  expansion of the heat-kernel in powers of $V$}\label{sec:preliminary}

In the following sections, unless otherwise stated, we will refer to the case $d=1$.
We will comment on the possibility of applying our results to $d$ dimensional operators 
and come back to an arbitrary dimension in Sec.~\ref{sec:qft}, when we will study a QFT.
To fix the notation, we will call $\Sigma$ the support of the delta function; 
by analogy with the $d$-dimensional case, for $d=1$ we will use the notation
\begin{align}
 \int_\Sigma \dxd[x][0]\,g(x) :=\left\{g(x)\right\}[\Sigma]:= g(L).
\end{align}\old{
Additionally, to enhance the readability of expressions we will make two definitions; 
one takes advantage of the symmetry under exchange of two variables and the other regards the derivative operator:
\begin{align}
 g(x,y)+ g(y,x) &=: \left[ g(x,y)+ \{x\leftrightarrow y\}\right],
 \\
\der_x:&=-\mathi \partial_x.
\end{align}
}

As stated in the introduction, we will show how to compute the HK of the operator $\ope$ to all order in the $\ampli$ parameter, 
albeit performing an expansion in powers of $\zeta$.
To zeroth order in $\zeta$,
the HK of the operator has already been computed in closed form in the literature, see Refs.~\cite{Bauch:1985, Gaveau:1986, Franchino-Vinas:2020okl}.
Since it will prove crucial in our computations, we state this result.

\begin{proposition}\label{prop:hk_delta}
The HK for the Laplace-type operator $\ope_{\zeta=0}$ is given by
\begin{align}
 \begin{split}\label{eq:HK_delta}
&\Kzeta(x,y;T)
 = K_0(x,y;T)- \frac{\ampli}{2}\int_{\mathbb{R}^+} \dx[u]\, e^{-\frac{u \ampli}{2}} K_0(|x-L|+|y-L|+u,0;T),
 \end{split}
\end{align}
where $K_0(x,y;T):=(4\pi T)^{-1/2}{\operatorname{exp}\left({-\frac{(x-y)^2}{4T}}\right)}$ corresponds to the free HK in  one-di\-men\-sio\-nal flat space.
\end{proposition}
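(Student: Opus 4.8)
The plan is to characterize the kernel $\Kzeta(x,y;T)$ (here $\Kzeta=K_\ampli$) as the unique solution of the heat problem~\eqref{eq:heat-equation} with $\genlaplace=\ope_{\zeta=0}=-\partial^2+\ampli\,\delta(x-\pos)$, and to reconstruct it from its resolvent. Since the delta potential is supported on the single point $x=\pos$, the kernel obeys the \emph{free} heat equation for $x\neq\pos$, while the singular term translates into the usual point-interaction matching conditions at $x=\pos$: continuity of $\Kzeta$ together with a jump $\partial_x\Kzeta|_{\pos^+}-\partial_x\Kzeta|_{\pos^-}=\ampli\,\Kzeta(\pos,y;T)$ of its first derivative. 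Rather than guess the ansatz, I would derive~\eqref{eq:HK_delta} constructively and then cross-check it against these conditions.

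First I would pass to the resolvent $G_\ampli(x,y;s):=(\ope_{\zeta=0}+s)^{-1}(x,y)$, i.e.\ the Laplace transform in $T$ of $\Kzeta$. Because $\ampli\,\delta(x-\pos)$ acts as a rank-one perturbation, the Born--Dyson series $G_\ampli=G_0-G_0\,(\ampli\delta_{\pos})\,G_0+\cdots$ is geometric and sums to the Krein resolvent formula for a point interaction,
\begin{equation*}
 G_\ampli(x,y;s)=G_0(x,y;s)-\frac{\ampli\,G_0(x,\pos;s)\,G_0(\pos,y;s)}{1+\ampli\,G_0(\pos,\pos;s)}.
\end{equation*}
Inserting the one-dimensional free resolvent $G_0(x,y;s)=\tfrac{1}{2\sqrt{s}}e^{-\sqrt{s}\,|x-y|}$ (the Laplace transform of $K_0$) and writing $a:=|x-\pos|+|y-\pos|$, the correction collapses to the closed form $-\tfrac{\ampli}{2\sqrt{s}\,(\ampli+2\sqrt{s})}\,e^{-\sqrt{s}\,a}$.

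The remaining step is to invert the Laplace transform. Here I would use the elementary identity $\tfrac{1}{\ampli/2+\sqrt{s}}=\int_{\mathbb{R}^+}\dx[u]\,e^{-u(\ampli/2+\sqrt{s})}$ to factor the correction as
\begin{equation*}
 G_\ampli(x,y;s)-G_0(x,y;s)=-\frac{\ampli}{2}\int_{\mathbb{R}^+}\dx[u]\,e^{-u\ampli/2}\,\frac{1}{2\sqrt{s}}\,e^{-\sqrt{s}(a+u)}.
\end{equation*}
Since $\tfrac{1}{2\sqrt{s}}e^{-\sqrt{s}(a+u)}$ is precisely the Laplace transform of $K_0(a+u,0;T)$, inverting term by term under the $u$-integral reproduces Eq.~\eqref{eq:HK_delta}.

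The main obstacle is rigour rather than computation: one must justify the resummation of the series for a distributional potential (in $d=1$ the diagonal value $G_0(\pos,\pos;s)=\tfrac{1}{2\sqrt{s}}$ is finite, so the geometric series converges for small $\ampli$ and the closed form extends by analyticity, matching the standard self-adjoint point interaction), and legitimize interchanging the inverse transform with the $u$-integral as well as the $T\downarrow0$ limit (where $a\geq0$ forces the correction to vanish, recovering $\delta(x-y)$). As an independent, transform-free check I would verify directly that the right-hand side of~\eqref{eq:HK_delta} solves the free heat equation for $x\neq\pos$ and meets the jump condition at $x=\pos$: differentiating $K_0(|x-\pos|+\cdots)$ twice produces a $2\delta(x-\pos)$ term from $\partial_x^2|x-\pos|$, and an integration by parts in $u$ (using $\partial_u K_0(c+u,0;T)=\partial_b K_0$ with $c:=|y-\pos|$) reassembles the boundary and bulk contributions into exactly $-\ampli\,\delta(x-\pos)\,\Kzeta(\pos,y;T)$, as required by~\eqref{eq:heat-equation}.
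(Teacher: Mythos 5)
Your proposal is correct, but note that the paper itself offers no proof of Proposition~\ref{prop:hk_delta}: it quotes the formula as a known result and points to the literature (Bauch, Gaveau--Schulman, and Ref.~\cite{Franchino-Vinas:2020okl}). What you have written is essentially the standard derivation found in those references, and the details check out. The Krein/Born-series resolvent is right: each insertion of $\lambda\delta_{\pos}$ contributes a factor $G_0(\pos,\pos;s)=\tfrac{1}{2\sqrt{s}}$, the geometric sum gives $G_\lambda=G_0-\lambda G_0(x,\pos)G_0(\pos,y)/(1+\tfrac{\lambda}{2\sqrt{s}})$, and with $a=|x-\pos|+|y-\pos|$ this collapses to $-\lambda e^{-\sqrt{s}\,a}/\bigl(2\sqrt{s}\,(\lambda+2\sqrt{s})\bigr)$ exactly as you state; the subordination identity $(\lambda/2+\sqrt{s})^{-1}=\int_{\mathbb{R}^+}\dx[u]\,e^{-u(\lambda/2+\sqrt{s})}$ then reproduces Eq.~\eqref{eq:HK_delta} upon inverting $\tfrac{1}{2\sqrt{s}}e^{-\sqrt{s}(a+u)}\mapsto K_0(a+u,0;T)$, which is legitimate since $a+u\geq 0$. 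Your transform-free cross-check is also sound and is arguably the cleaner way to make the statement rigorous for all $\lambda\geq 0$ at once: the jump of $\partial_x$ across $\pos$ of the correction term equals $-\lambda\int_{\mathbb{R}^+}\dx[u]\,e^{-u\lambda/2}\partial_u K_0(|y-\pos|+u,0;T)$, and the integration by parts in $u$ yields the boundary term $\lambda K_0(\pos,y;T)$ plus $\lambda$ times the correction itself, i.e.\ precisely $\lambda \Kzeta(\pos,y;T)$, while the $T\downarrow 0$ limit of the correction vanishes distributionally because $a+u\geq 0$. The only point worth flagging is the one you already flag yourself: the geometric series converges only for $\lambda<2\sqrt{s}$, so either invoke analytic continuation in $s$ (the closed form has no pole for $s>0$ when $\lambda\geq 0$) or simply take the direct verification against the heat equation and matching conditions, plus uniqueness of the heat semigroup, as the actual proof.
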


\begin{corollary}\label{cor:trace_hk_delta}
As an immediate consequence of Proposition~\ref{prop:hk_delta}, one can prove that the functional trace of $\ope_{\zeta=0}$'s  HK  is given by (see~\cite{Munoz-Castaneda:2013yga})
\begin{align}
\label{eq:HK_delta_trace}
&\Tr \Big( \Kzeta(x,y;T)-K_0(x,y;T)\Big)
 = \frac{1}{2} \int_{\Sigma}\dxd[x][0]\left[e^{\frac{\lambda ^2 T}{4}} \operatorname{erfc}\left(\frac{\lambda  \sqrt{T}}{2}\right)-1\right].
\end{align} 
\end{corollary}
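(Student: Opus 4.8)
The plan is to read off the trace directly from the closed form in Proposition~\ref{prop:hk_delta}. In $d=1$ the functional trace amounts to setting $y=x$ and integrating over $x\in\mathbb{R}$; since the free kernel $K_0$ is subtracted, only the $\ampli$-dependent correction survives, and on the diagonal its argument collapses to $2|x-\pos|+u$. First I would therefore write
\begin{align}
\Tr\Big(\Kzeta - K_0\Big) = -\frac{\ampli}{2}\int_{\mathbb{R}}\dx\int_{\mathbb{R}^+}\dx[u]\, e^{-\frac{u\ampli}{2}}\, K_0\big(2|x-\pos|+u,0;T\big).
\end{align}
Because the integrand depends on $x$ only through $|x-\pos|$, it is symmetric under the reflection $x\mapsto 2\pos-x$, so I would replace $\int_{\mathbb{R}}\dx$ by $2\int_0^\infty\dx[w]$ with $w=|x-\pos|$.

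Next I would disentangle the double integral. The substitution $s=2w+u$ at fixed $u$ turns the $w$-integral into a Gaussian tail $\int_u^\infty e^{-s^2/4T}\,{\rm d}s=\sqrt{\pi T}\,\operatorname{erfc}\!\big(u/2\sqrt{T}\big)$, and the prefactor $(4\pi T)^{-1/2}$ coming from $K_0$ combines with $\sqrt{\pi T}$ into a clean factor $1/2$. At this stage the problem reduces to the single integral
\begin{align}
\Tr\Big(\Kzeta - K_0\Big) = -\frac{\ampli}{4}\int_0^\infty \dx[u]\, e^{-\frac{u\ampli}{2}}\,\operatorname{erfc}\!\left(\frac{u}{2\sqrt{T}}\right).
\end{align}

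The crucial ingredient is then the Laplace transform of the complementary error function, $\int_0^\infty e^{-pu}\operatorname{erfc}(au)\,{\rm d}u=\frac1p\big[1-e^{p^2/4a^2}\operatorname{erfc}(p/2a)\big]$. Taking $p=\ampli/2$ and $a=1/2\sqrt{T}$ gives $p/2a=\ampli\sqrt{T}/2$ and $p^2/4a^2=\ampli^2T/4$, and substituting back reproduces exactly $\tfrac12\big[e^{\ampli^2T/4}\operatorname{erfc}(\ampli\sqrt{T}/2)-1\big]$. Finally, since in $d=1$ the surface integral $\int_\Sigma\dxd[x][0]$ is just evaluation at $\pos$ of an $x$-independent quantity, the answer takes the stated surface form. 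I expect no genuine obstacle beyond bookkeeping: the only points requiring care are justifying Fubini, which follows from absolute convergence for $\ampli\geq 0$ and $T>0$, and matching the numerical constants. As an alternative to invoking the tabulated transform, the last integral can be obtained by differentiating under the integral sign with respect to $\ampli$ and solving the resulting first-order ODE, which yields the same closed form.
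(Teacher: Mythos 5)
Your computation is correct and is exactly the ``immediate consequence'' the paper has in mind: tracing the closed form of Proposition~\ref{prop:hk_delta} on the diagonal, reducing the $x$-integral to a Gaussian tail giving $\operatorname{erfc}$, and evaluating the resulting Laplace transform $\int_0^\infty e^{-\lambda u/2}\operatorname{erfc}\bigl(u/2\sqrt{T}\bigr)\,{\rm d}u=\tfrac{2}{\lambda}\bigl[1-e^{\lambda^2T/4}\operatorname{erfc}(\lambda\sqrt{T}/2)\bigr]$. All constants check out, so nothing further is needed.
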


\begin{theorem}\label{cor:trace_hk_delta_smearing}
 \old{
 If one adds a smearing function, then the  trace of the operator $\ope_{\zeta=0}$'s HK, under the assumptions around Eq.~\eqref{eq:operator_D}, is given by 
 \begin{align}
 \begin{split}\label{eq:trace_hk_zeroth}
&\Tr\Big( f(x)\Kzeta(x,y;T)  \Big)
 =
 \frac{1}{\sqrt{4\pi T}}\int_{\mathbb{R}} \dx[x] f(x) + \int_{\Sigma} \dxd[x][0]  
 \hkf_{0,\Sigma}^{(f)}(\der_x;T;\lambda) f(x),
 \end{split}
 \end{align}
where we have defined the kernel
\begin{align}\label{eq:kernel_order0}
  \hkf_{0,\Sigma}^{(f)}(k;T;\lambda):&=\frac{\lambda e^{-\frac{k^2 T}{4} }}{2 \left(k^2+\lambda ^2\right)} \Bigg[\lambda e^{\frac{(\lambda ^2 +k^2 )T}{4} } \operatorname{erfc}\left(\frac{\lambda  \sqrt{T}}{2}\right)-{k  \operatorname{erfi}\left(\frac{k \sqrt{T}}{2}\right)}-{\lambda }\Bigg].
 \end{align}
}
\end{theorem}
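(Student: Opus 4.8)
The plan is to start from the closed-form heat-kernel of Proposition~\ref{prop:hk_delta}, restrict it to the diagonal $y=x$, and integrate against the smearing function, using that $\Tr\big(f(x)\Kzeta(x,y;T)\big)=\int_{\mathbb{R}}\dx[x]\,f(x)\,\Kzeta(x,x;T)$. On the diagonal one has $|x-L|+|y-L|=2|x-L|$, so Eq.~\eqref{eq:HK_delta} splits into a free piece and a $\ampli$-dependent piece. The free piece is simply $K_0(x,x;T)=(4\pi T)^{-1/2}$, which integrated against $f$ produces the volume term $\tfrac{1}{\sqrt{4\pi T}}\int_{\mathbb{R}}\dx[x]\,f(x)$ at once. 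Everything nontrivial is therefore contained in the delta contribution, and the task reduces to showing that it equals $\int_{\Sigma}\dxd[x][0]\,\hkf_{0,\Sigma}^{(f)}(\der_x;T;\ampli)f(x)=\big[\hkf_{0,\Sigma}^{(f)}(\der_x;T;\ampli)f\big](L)$.

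To identify that delta contribution with the action of a pseudo\-dif\-fer\-en\-tial operator, I would read $\hkf_{0,\Sigma}^{(f)}(\der_x;T;\ampli)$ as the operator with symbol $\hkf_{0,\Sigma}^{(f)}(k;T;\ampli)$, where $\der_x=-\mathi\partial_x$ acts on a plane wave $e^{\mathi k x}$ by multiplication by $k$. Writing $f(x)=\int\dk\,\tilde f(k)\,e^{\mathi k x}$ (for $f$ in the Schwartz class, so that Fubini applies), the statement reduces to the mode-by-mode identity that, after shifting $x\to x+L$ and using the evenness of $|x|$ to trade $e^{\mathi kx}$ for $\cos(kx)$, the remaining double integral reproduces the symbol, i.e.\ that
\begin{align}
 -\frac{\ampli}{2}\int_{\mathbb{R}^+}\dx[u]\,e^{-\frac{u\ampli}{2}}\int_{\mathbb{R}}\dx[x]\,e^{\mathi k x}\,(4\pi T)^{-1/2}\exp\left(-\frac{(2|x|+u)^2}{4T}\right)=\hkf_{0,\Sigma}^{(f)}(k;T;\ampli).
\end{align}
The factor $e^{\mathi kL}$ produced by the shift is exactly what the evaluation at $x=L$ reinserts, so this identity closes the argument.

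The core of the computation, and the step I expect to be the main obstacle, is evaluating the above Gaussian integral in terms of error functions. I would do the $u$-integral first by completing the square, which yields a factor $e^{\frac{\ampli^2 T}{4}+\ampli x}$ times $\operatorname{erfc}\big(\tfrac{\ampli\sqrt{T}}{2}+\tfrac{x}{\sqrt{T}}\big)$; the surviving integral $\int_0^\infty\dx[x]\,\cos(kx)\,e^{\ampli x}\operatorname{erfc}(\cdots)$ I would then treat by integration by parts (the boundary term at infinity vanishing because $\operatorname{erfc}$ decays like a Gaussian, which dominates $e^{\ampli x}$ for any $\ampli$) followed by one more Gaussian integration. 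Splitting $\cos(kx)$ into the two modes $c=\ampli\pm\mathi k$, the exponent collapses to $-\tfrac{(\ampli^2+k^2)T}{4}$ in both, and the error-function arguments become the imaginary $\mp\tfrac{\mathi k\sqrt{T}}{2}$; I would convert these via $\operatorname{erfc}(\pm\mathi z)=1\mp\mathi\operatorname{erfi}(z)$ with $z=\tfrac{k\sqrt{T}}{2}$. Combining the two modes over the common denominator $\ampli^2+k^2$ makes the imaginary parts cancel and reproduces precisely $\hkf_{0,\Sigma}^{(f)}(k;T;\ampli)$ of Eq.~\eqref{eq:kernel_order0}. As a consistency check I would set $f\equiv 1$, so that $\tilde f(k)=2\pi\,\delta(k)$ selects $k=0$; since $\operatorname{erfi}(0)=0$, the symbol reduces to $\tfrac{1}{2}\big[e^{\ampli^2T/4}\operatorname{erfc}(\ampli\sqrt{T}/2)-1\big]$, recovering exactly the integrand of Corollary~\ref{cor:trace_hk_delta}. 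The only care required beyond bookkeeping is in tracking the error-function identities and in justifying the interchange of integration orders, both of which are controlled by the Gaussian decay of the kernel and the decay of $f$.
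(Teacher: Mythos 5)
Your proposal is correct: the paper states this theorem without spelling out a proof, but the route you take --- restricting Proposition~\ref{prop:hk_delta} to the diagonal, Fourier-representing $f$, doing the $u$-integral by completing the square, and splitting $\cos(kx)$ into the modes $c=\ampli\pm\mathi k$ --- is the natural and evidently intended derivation, and the computation does reproduce Eq.~\eqref{eq:kernel_order0} exactly (the $\operatorname{erfc}\!\left(\tfrac{\ampli\sqrt{T}}{2}\right)$ term arising from the $x=0$ boundary term of your integration by parts, which you should state explicitly alongside the vanishing term at infinity). Your $f\equiv 1$ check against Corollary~\ref{cor:trace_hk_delta} also goes through, since $\operatorname{erfi}(0)=0$ collapses the symbol to $\tfrac{1}{2}\bigl[e^{\ampli^2 T/4}\operatorname{erfc}\bigl(\tfrac{\ampli\sqrt{T}}{2}\bigr)-1\bigr]$.
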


\begin{corollary}\label{cor:order0_SDW_coef}
 \old{ 
 The only nonvanishing bulk GSDW coefficient associated to the operator $\ope_{\zeta=0}$, can be readily read from Eq.~\eqref{eq:trace_hk_zeroth}.
 The surface contributions to the GSDW coefficients  
vanish for $n<1$.
For $n\geq 1$ an odd integer, they read
\begin{align}\label{eq:order0_SDW}
 \begin{split}
&a_{n,\Sigma}^{(0)}(f;\ope_{\zeta=0})
=
\Bigg\lbrace
 \frac{\lambda  }{2^{n} \Gamma \left(\frac{n+1}{2}\right)} \frac{  \left( \lambda ^n-\lambda  \partial^{n-1}\right)}{ \left(\lambda ^2-\partial^2\right) } f(x)
 \Bigg\rbrace[\Sigma]
 ,
 \end{split}
 \end{align}
 while, for $n\geq 2$ and even, we have
 \begin{align}\label{eq:order0_SDW2}
 \begin{split}
& a_{n,\Sigma}^{(0)}(f;\ope_{\zeta=0})
 =\Bigg\lbrace
 \frac{\lambda  }{2^{n}\Gamma \left(\frac{n+1}{2}\right)}\frac{  \big(\partial^n- (-\lambda )^n\big)}{\left(\lambda ^2-\partial^2\right) } f(x)
 \Bigg\rbrace[\Sigma].
\end{split}
\end{align}
}
\end{corollary}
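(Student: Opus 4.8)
The plan is to extract the coefficients directly from the small-$T$ expansion of the trace in Theorem~\ref{cor:trace_hk_delta_smearing}, matching powers of $T$ against the general form~\eqref{eq:hk_expansion} with $m=1$, so that $T^{(n-m)/2}=T^{(n-1)/2}$. The bulk piece of Eq.~\eqref{eq:trace_hk_zeroth} is precisely $\tfrac{1}{\sqrt{4\pi}}\,T^{-1/2}\int_{\mathbb{R}}f$, carrying the single power $T^{(0-1)/2}$; because the background is flat and the potential is switched off at $\zeta=0$, no further bulk invariant can be built, which is why $a_0$ is the only nonvanishing bulk coefficient. All the remaining work is the expansion of the surface kernel $\hkf_{0,\Sigma}^{(f)}$ of Eq.~\eqref{eq:kernel_order0} in powers of $\sqrt{T}$, followed by the replacement $k\mapsto\der_x=-\mathi\partial_x$ and evaluation on $\Sigma$.

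Concretely, I would write $a=\tfrac{\lambda\sqrt{T}}{2}$, $b=\tfrac{k\sqrt{T}}{2}$ and pull the $T$-independent factor $\tfrac{\lambda}{2(k^2+\lambda^2)}$ out of the kernel, leaving the bracket $\lambda e^{a^2}\operatorname{erfc}(a)-k\,e^{-b^2}\operatorname{erfi}(b)-\lambda e^{-b^2}$ to be expanded. For the first term the shifted-Gaussian representation $e^{a^2}\operatorname{erfc}(a)=\tfrac{2}{\sqrt{\pi}}\int_{0}^{\infty}e^{-2as-s^2}\,{\rm d}s$ is convenient: expanding $e^{-2as}$ and using $\int_0^\infty s^j e^{-s^2}\,{\rm d}s=\tfrac12\Gamma(\tfrac{j+1}{2})$ produces a clean series $\sum_j c_j\,a^j$ with $c_j=\tfrac{(-2)^j}{j!\sqrt{\pi}}\Gamma(\tfrac{j+1}{2})$. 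For the second term I would use the Dawson series $e^{-b^2}\operatorname{erfi}(b)=\sum_{\ell\ge0}(-1)^\ell\,b^{2\ell+1}/\Gamma(\ell+\tfrac32)$, and the third term is the ordinary exponential series. Reinstating $a$ and $b$, the three pieces contribute, respectively, all powers $T^{j/2}$, only those with $j$ odd, and only those with $j$ even, where $j=n-1$.

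This parity split is exactly what generates the two cases of the statement: for $n$ odd ($j$ even) only the $\operatorname{erfc}$ and $e^{-b^2}$ series survive and yield the numerator $\lambda^n-\lambda\partial^{n-1}$, whereas for $n$ even ($j$ odd) the $\operatorname{erfc}$ and $\operatorname{erfi}$ series survive and yield $\partial^n-(-\lambda)^n$. To reach the compact prefactor of Eqs.~\eqref{eq:order0_SDW}--\eqref{eq:order0_SDW2} I would apply the Legendre duplication formula $\Gamma(z)\Gamma(z+\tfrac12)=2^{1-2z}\sqrt{\pi}\,\Gamma(2z)$ with $z=\tfrac{n}{2}$, which collapses the products of $\Gamma$'s and factorials coming from the two surviving series into the single factor $2^{-n}/\Gamma(\tfrac{n+1}{2})$. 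Finally, restoring $k\mapsto\der_x$ turns $k^2+\lambda^2$ into $\lambda^2-\partial^2$ and each surviving even power $k^{2p}$ into $(-1)^p\partial^{2p}$; acting on $f$ and using $\{\,\cdot\,\}[\Sigma]$ to evaluate at $x=L$ reproduces Eqs.~\eqref{eq:order0_SDW} and~\eqref{eq:order0_SDW2}.

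The step I expect to be most delicate is the bookkeeping of signs and factors of $\mathi$ under $k\mapsto-\mathi\partial_x$: one must verify that the $(-\mathi)^{2p}=(-1)^p$ arising from the momentum powers combines with the alternating signs of the three series so that all factors of $\mathi$ cancel and the resulting differential operators are real, as befits the self-adjoint problem $\ope_{\zeta=0}$. A minor additional check, settling the claim that surface contributions vanish for $n<1$, is that the lowest-order ($n=1$, $j=0$) coefficients of the $\operatorname{erfc}$ and $e^{-b^2}$ series cancel, $\lambda-\lambda=0$, consistent with the numerator $\lambda^n-\lambda\partial^{n-1}$ itself vanishing at $n=1$; thus the surface expansion effectively begins at $n=2$.
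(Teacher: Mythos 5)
Your proposal is correct and follows the same route the paper implicitly takes: Corollary~\ref{cor:order0_SDW_coef} is obtained precisely by the small-$T$ expansion of the kernel $\hkf_{0,\Sigma}^{(f)}$ in Eq.~\eqref{eq:kernel_order0}, matching $T^{(n-1)/2}$ against Eq.~\eqref{eq:hk_expansion} with $m=1$ and substituting $k\mapsto\der_x$. Your series for $e^{a^2}\operatorname{erfc}(a)$, the Dawson function and the Gaussian, the parity split between the odd- and even-$n$ cases, and the sign bookkeeping under $k\mapsto-\mathi\partial_x$ (including the check $a_{1,\Sigma}=0$, consistent with the leading surface term arising at $n=2$) all reproduce Eqs.~\eqref{eq:order0_SDW} and~\eqref{eq:order0_SDW2}.
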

\old{ 
In order to check the validity of these results, 
one can compare the first coefficients with those previously computed in Ref.~\cite{Gilkey:2001mj}.
Notice also that the presence in the coefficients  of a quotient involving derivatives is only apparent: 
considering $\partial$ as a symbol, one should first perform the division; the latter is exact for every $n$,
showing thus that the coefficients are made of local terms.
}

{
\begin{comm}\label{comm:order0}
 Taking into account the universality of the numerical coefficients in the HK expansion, 
 one can lift the one-dimensional results in Corollary~\ref{cor:order0_SDW_coef} to arbitrary dimension, 
 including the possibility of introducing additional geometrical features such as a curved manifold $M$ and a curved surface $\Sigma$. 
 Then: 
 \begin{enumerate}
  \item The dependence on the dimension would be only through an overall factor $(4\pi)^{-m/2}$, see Ref.~\cite{Bordag:1999ed}.
\item  The coefficients in Eqs.~\eqref{eq:order0_SDW} and \eqref{eq:order0_SDW2} would correspond to integrals over a $(d-1)$-dimensional hypersurface $\Sigma$ with the appropriate measure.
The variable $\lambda$ may be taken as a space-dependent quantity (living on the surface $\Sigma$) and the derivatives in \eqref{eq:order0_SDW} and \eqref{eq:order0_SDW2} become covariant derivatives normal to $\Sigma$.
 The integrals over $\mathbb{R}$ would be replaced by integrals over $M$ with the corresponding measure.
\item  The additional geometrical features would manifest themselves as the appearance of new nonvanishing invariants contributing to the GSDW coefficients, including curvatures of the spacetime, 
 the extrinsic curvature (also called second fundamental form) of the hypersurface $\Sigma$, covariant derivatives in the directions tangent to $\Sigma$ acting on $\lambda$, etc.
 \end{enumerate}
\end{comm}
}

The situation becomes more involved if one tries to perform the computation to higher orders in $\zeta$.
One powerful technique that has been employed in the past as a way to develop asymptotic expansions of HKs
has been the derivation of an integral equation for the HK, alternative to the differential equation~\eqref{eq:heat-equation}; 
see for example~\cite{Gaveau:1986, Barvinsky:1987uw, Bordag:1999ed, Bordag:2001ta, Bordag:2001fj}.
Customarily, such an integral equation would involve the free HK;
instead, in the present case $\Kzeta$ will play its role.

\begin{proposition}\label{prop:integral_equation}
The HK associated to $\ope$  satisfies the integral equation
\begin{align}\label{eq:integral_equation}
 \begin{split}
&K(x,y;T;\ope)= \Kzeta(x,y;T)+ \zeta\int^T_0 \dx[s] \int_{\mathbb{R}}\dx[z]\, \Kzeta(x,z;T-s) V(z) K(z,y;s;\ope).
 \end{split}
\end{align}
\end{proposition}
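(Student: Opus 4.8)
The plan is to treat the operator in Eq.~\eqref{eq:operator_D} as a perturbation of the solvable operator $\ope_{\zeta=0}=-\partial^2+\ampli\,\delta(\perpe[x]-\pos)$ by the multiplicative term $-\zeta V$, and to derive the integral equation via Duhamel's (variation-of-constants) principle at the level of the heat semigroups. Writing $A:=\ope_{\zeta=0}$, whose semigroup $e^{-TA}$ has the explicit kernel $\Kzeta$ of Proposition~\ref{prop:hk_delta}, the algebraic input that drives the whole argument is the identity $\ope_{\zeta=0}-\ope=\zeta V$.

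Concretely, I would consider the operator-valued function $s\mapsto e^{-(T-s)A}\,e^{-s\ope}$ on $[0,T]$ and differentiate it. The product rule gives
\begin{align}
 \frac{{\rm d}}{{\rm d}s}\Big(e^{-(T-s)A}\,e^{-s\ope}\Big)
 &= e^{-(T-s)A}\,(A-\ope)\,e^{-s\ope}
 = \zeta\, e^{-(T-s)A}\,V\,e^{-s\ope}.
\end{align}
Integrating from $s=0$ to $s=T$, the left-hand side telescopes to $e^{-T\ope}-e^{-TA}$ by the fundamental theorem of calculus, so that
\begin{align}
 e^{-T\ope} = e^{-TA} + \zeta\int_0^T{\rm d}s\; e^{-(T-s)A}\,V\,e^{-s\ope}.
\end{align}
Since $V$ acts by multiplication, the kernel of the composition $e^{-(T-s)A}\,V\,e^{-s\ope}$ is $\int_{\mathbb{R}}{\rm d}z\,\Kzeta(x,z;T-s)\,V(z)\,K(z,y;s;\ope)$, and reading off the kernels of both sides reproduces Eq.~\eqref{eq:integral_equation}.

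As an independent check that dispenses with the abstract semigroup calculus, one can verify directly that $F(x,y;T)$, the right-hand side of Eq.~\eqref{eq:integral_equation}, equals $K(x,y;T;\ope)$. Differentiating in $T$ with the Leibniz rule, the endpoint $s=T$ contributes $\zeta V(x)\,K(x,y;T;\ope)$ through $\Kzeta(x,z;\downarrow0)=\delta(x-z)$, whereas $\partial_T\Kzeta(x,y;T)=-\ope_{\zeta=0,x}\Kzeta(x,y;T)$ together with the $T$-derivative of $\Kzeta$ under the integral reassemble $-\ope_{\zeta=0,x}F$; hence $(\partial_T+\ope_{\zeta=0,x})F=\zeta V(x)\,K(x,y;T;\ope)$. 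Because $\ope=\ope_{\zeta=0}-\zeta V$, the genuine kernel $K(\cdot;\ope)$ satisfies the very same relation, so the difference $G:=K(\cdot;\ope)-F$ obeys the homogeneous equation $(\partial_T+\ope_{\zeta=0,x})G=0$ with $G(x,y;\downarrow0)=0$ (the $\zeta$-term integrates over a collapsing interval). Uniqueness for the heat problem~\eqref{eq:heat-equation} of $\ope_{\zeta=0}$ then gives $G\equiv0$, i.e. $F=K(\cdot;\ope)$.

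The main obstacle is rigour rather than algebra: it lies in justifying the semigroup manipulations in the presence of the singular delta (transmittal) interaction. One must invoke the self-adjoint realization of $A=\ope_{\zeta=0}$ with the appropriate matching conditions across $\Sigma$, and establish that $V$ is a sufficiently small (e.g. relatively bounded) perturbation, so that $e^{-s\ope}$ exists, $V\,e^{-s\ope}$ is well defined, and the $s$- and $z$-integrals converge and may be interchanged. This is precisely where the positivity and rapid-decay hypotheses imposed around Eq.~\eqref{eq:operator_D} enter. The Gaussian decay of $K_0$ built into $\Kzeta$ through Proposition~\ref{prop:hk_delta}, combined with the decay of $V$, furnishes the estimates needed to control these integrals and to differentiate under the integral sign.
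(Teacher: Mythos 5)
Your proposal is correct and is essentially the paper's own argument: the Duhamel identity obtained by integrating $\tfrac{{\rm d}}{{\rm d}s}\big(e^{-(T-s)\ope_{\zeta=0}}e^{-s\ope}\big)=\zeta\,e^{-(T-s)\ope_{\zeta=0}}\,V\,e^{-s\ope}$ over $[0,T]$ is precisely the kernel-level manipulation the paper performs, namely substituting $\zeta V(z)K(z,y;s;\ope)=[\partial_s-\partial_z^2+\lambda\delta(z-L)]K(z,y;s;\ope)$ into the integrand and integrating by parts in $s$ and $z$ using the symmetry of $\Kzeta$, so that the boundary terms in $s$ produce $K(x,y;T;\ope)-\Kzeta(x,y;T)$. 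Your secondary verification via uniqueness of the heat problem is a harmless variant, and your closing remarks on the analytic hypotheses match the assumptions the paper places around Eq.~\eqref{eq:operator_D}.
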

\begin{proof}
 The proof follows from direct application of the operator $\ope$ on both sides of Eq.~\eqref{eq:integral_equation}, analogously to the usual case. 
 After using the corresponding heat equation for $\Kzeta$, one can interpret $V(z) K(z,y;s;\ope)=[\partial_s-\partial_z^2+\lambda \delta(z-L)]K(z,y;s;\ope)$.
 Using the symmetry of $\Kzeta$ in its first two arguments after integrating by parts in $s$ and $z$,  one obtains then the desired result.
\end{proof}

The benefit of introducing the integral equation \eqref{eq:integral_equation} is that it is particularly well-suited to perform a perturbative expansion in the potential. 
Indeed, by direct application of the heat equation and subsequent integrations by parts in the intermediate propertimes, 
one can prove  the following corollary.
\begin{corollary}\label{cor:perturbative}
The HK of the operator $\ope$ can be expanded as a\old{formal}series in $\zeta$,
\begin{align}\label{eq:perturbative_expansion}
 \begin{split}
K(x,y;T;\ope)
 &=\sum_{n=0}^{\infty} K^{(n)}(x,y;T;\ope)\zeta^n,
 \end{split}
\end{align}
where the coefficients are made of iterated integrals of the potential $V$ and the heat-kernel $\Kzeta$:
\begin{align}\label{eq:perturbative_solution}
 \begin{split}
 K^{(n)}(x,y;T;\ope):&= \int_0^T \dx[s_n]\cdots \int^{s_2}_0 \dx[s_1] \int_{\mathbb{R}}\dx[z_1]\cdots \int_{\mathbb{R}}\dx[z_n]\, \Kzeta(x,z_1;s_1) 
  \\
  &\hspace{0.6cm}\times V(z_1) \Kzeta(z_1,z_2;s_2-s_1) \cdots V(z_n)   \Kzeta(z_n,y;T-s_n)
   \end{split}
\end{align}
and $K^{(0)}(x,y;T;\ope):=\Kzeta(x,y;T)$.
\end{corollary}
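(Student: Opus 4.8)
The plan is to obtain the series by iterating the integral equation of Proposition~\ref{prop:integral_equation}, that is, by building the associated Neumann (Dyson) series. Concretely, I would insert the ansatz \eqref{eq:perturbative_expansion}, $K=\sum_{n\geq 0}K^{(n)}\zeta^n$, into Eq.~\eqref{eq:integral_equation} and match equal powers of $\zeta$. Since $\Kzeta$ carries no $\zeta$-dependence, this yields $K^{(0)}=\Kzeta$ together with the one-step recursion
\begin{align*}
 K^{(n)}(x,y;T)=\int_0^T\dx[s]\int_{\mathbb{R}}\dx[z]\,\Kzeta(x,z;T-s)\,V(z)\,K^{(n-1)}(z,y;s),
\end{align*}
which simply records that each additional factor of $V$ enters through one more convolution with $\Kzeta$ over an intermediate propertime. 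Equivalently, one may iterate \eqref{eq:integral_equation} directly and read off the $\zeta^n$ coefficient as an $n$-fold time-ordered integral.

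The second step is to solve this recursion in closed form by induction on $n$, the base case $n=0$ being immediate. In the inductive step I would substitute the assumed expression for $K^{(n-1)}$ into the recursion; the only real work is to recognise that the resulting nested propertime integrals can be written as an integral over the standard propertime simplex $\{t_1,\dots,t_{n+1}>0,\ \sum_i t_i=T\}$, where $t_i$ is the elapsed time carried by the $i$-th factor $\Kzeta$. Passing between the nested, time-ordered variables $s_1<\dots<s_n$ appearing in Eq.~\eqref{eq:perturbative_solution} and this symmetric simplex is the linear change of variables $t_1=s_1$, $t_i=s_i-s_{i-1}$, $t_{n+1}=T-s_n$, which has unit Jacobian and, crucially, leaves the spatial chain $x\to z_1\to\cdots\to z_n\to y$ untouched. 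Carrying this out reproduces exactly Eq.~\eqref{eq:perturbative_solution}.

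To make the argument rigorous rather than merely formal, one must justify (i) the interchange of the propertime and spatial integrations (Fubini) at each iteration, and (ii) the interchange of the sum over $n$ with the integrals. Both follow from the Gaussian bound on $K_0$ inherited by $\Kzeta$ through Proposition~\ref{prop:hk_delta}, combined with the decay/integrability of $V$ assumed around Eq.~\eqref{eq:operator_D}: the pointwise estimate $\Kzeta(a,z;t)\leq C\,t^{-1/2}$ together with the Dirichlet integral over the simplex produces a bound of the type $|K^{(n)}(x,y;T)|\lesssim (C\lVert V\rVert_{1}\sqrt{T}\,)^{n}/\Gamma\!\left(\tfrac{n+1}{2}\right)$, whose super-exponentially growing denominator shows that the series converges (in fact defines an entire function of $\zeta$) and legitimises all the manipulations.

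The main obstacle I anticipate is precisely this bookkeeping of the propertime structure: keeping track of which propertime belongs to which factor and checking that the iterated, time-ordered integrals collapse to the ordered-simplex form of Eq.~\eqref{eq:perturbative_solution} without any spurious reordering of the spatial variables. The analytic control (Fubini, convergence) is standard once the Gaussian estimate on $\Kzeta$ is in hand; if one is content with the \emph{formal} series as stated, these estimates can be dispensed with and the claim reduces to the purely combinatorial rearrangement of iterated integrals. As an alternative route one could instead verify directly that $\sum_n K^{(n)}\zeta^n$ solves the heat equation~\eqref{eq:heat-equation} with the localized initial datum, differentiating term by term and integrating by parts in the intermediate propertimes to handle the endpoint contributions.
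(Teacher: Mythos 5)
Your proposal is correct, and it reaches Eq.~\eqref{eq:perturbative_solution} by a route that differs slightly from the paper's. You iterate the integral equation of Proposition~\ref{prop:integral_equation} as a Neumann/Dyson series, matching powers of $\zeta$ to get the one-step recursion and then collapsing the nested propertime integrals onto the ordered simplex by the unit-Jacobian change of variables; this derives the coefficients constructively from the proposition, which is the natural reading of ``corollary''. The paper instead proceeds in the opposite direction: it takes Eq.~\eqref{eq:perturbative_solution} as an ansatz and checks it by ``direct application of the heat equation and subsequent integrations by parts in the intermediate propertimes'', i.e.\ the telescoping of the boundary terms produced by the $s_i$-integrations reproduces $\zeta V$ times the previous-order coefficient --- precisely the alternative route you mention in your final paragraph. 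The two arguments are logically equivalent here because $\Kzeta$ is symmetric and $V$ acts by multiplication, so no ordering issues arise in the spatial chain. What your version adds is the quantitative part: the paper only claims a formal series, whereas your bound $|K^{(n)}|\lesssim (C\|V\|_{1}\sqrt{T})^{n}/\Gamma\bigl(\tfrac{n+1}{2}\bigr)$, obtained from the pointwise estimate $|\Kzeta(a,z;t)|\leq C t^{-1/2}$ (which indeed follows from Proposition~\ref{prop:hk_delta} for $\lambda\geq 0$) together with the Dirichlet integral over the simplex, upgrades this to actual convergence for all $\zeta$ and justifies the Fubini interchanges. This is consistent with, and sharpens, the paper's stated hypothesis that $V$ decay fast enough at infinity.
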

\old{
In accordance with the hypotheses detailed in Sec.~\ref{sec:introduction}, 
all the integrals in Eq.~\eqref{eq:perturbative_solution} exist as long as the potential $V$ decays fast enough at infinity.
This comment implies that this expansion is not valid for potentials that diverge at infinity, such as a harmonic one. 
One general situation in which this expansion is expected to be useful is that in which the potentials have two scales, 
one associated with its amplitude and the other one with its derivatives, being the latter much bigger than the former.
}

\section{Heat-kernel's trace: first order in $V$}\label{sec:first_order}
We will now show how to obtain a closed expression for the HK's trace at linear order in $V$.
Let us first introduce, as usual, a smearing function $f(x)\in C_{0}^{\infty}(\mathbb{R})$, 
i.e. an infinitely differentiable function with compact support.
Recall also that the imaginary error function $\operatorname{erfi}(\cdot)$ and the complementary error function $\operatorname{erfc}(\cdot)$
are defined in terms of the error function $\operatorname{erf}(\cdot)$ as
\begin{align}
\operatorname{erf}(x):=\frac{2}{\sqrt{\pi}} \int_0^x \dx[t]\, e^{-t^2},\quad \operatorname{erfi}(x):= -\mathi \operatorname{erf}(\mathi x),\quad 
\operatorname{erfc}(x):= 1- \operatorname{erf}(x).
\end{align}

\begin{theorem}\label{th:firs_order}
 The trace of the operator $\ope$'s HK, under the assumptions around Eq.~\eqref{eq:operator_D} and at linear order in $V$, is given by 
\begin{align}
 \begin{split}\label{eq:order1_hk_f}
  \Tr\left( f(x) K^{(1)}(x,y;T;\ope) \right) &= \int_\mathbb{R} \dx \,f(x) \hkf_{1,M}^{(f)}(\der_x;T) V(x)
  \\
  &\hu+ \int_{\Sigma}\dxd[x][0]  \hkf_{1,\Sigma}^{(f)}(\der_y,\der_z;T;\lambda) f(y) V(z)\Big\vert_{y=z=x},
 \end{split}
\end{align}
where we have defined the kernels
\begin{align}
 \begin{split}\label{eq:order1_hk_f_bulk}
 \hkf_{1,M}^{(f)}(k;T)
  :&=
  \frac{   e^{-\frac{1}{4} \left(k^2 T\right)}  \operatorname{erfi}\left(\frac{\sqrt{T  k^2} }{2}\right)}{2 \sqrt{ k^2} },
 \end{split}
\\
\begin{split}\label{eq:order1_hk_f_boundary}
\hkf_{1,\Sigma}^{(f)}(k_1,k_2;T;\lambda)
:&= 
\Bigg[ -\lambda\frac{k_1 \left(\left(k_1+k_2\right) \lambda ^2+k_1^2 \left(k_1-k_2\right)\right) }{ \left(k_1^2-k_2^2\right) k_2  \left(\lambda ^2+k_1^2\right)^2} e^{-\frac{1}{4} k_1^2 T} \operatorname{erfi}\left(\frac{k_1 \sqrt{T}}{2}\right)
\\
&\hspace{-2cm}
-\frac{\lambda^2  k_1 \left(\lambda ^2+k_1^2-2 k_1 k_2\right) e^{-\frac{1}{4} k_1^2 T}}{ \left(k_1^2-k_2^2\right) k_2  \left(\lambda ^2+k_1^2\right)^2}
+ \{k_1\leftrightarrow k_2\}
\Bigg]
\\
&\hspace{-2cm}+\frac{\lambda\left(k_1+k_2\right) }{ k_1 k_2 \left(\lambda ^2+\left(k_1+k_2\right)^2\right)}e^{-\frac{1}{4} \left(k_1+k_2\right)^2 T} \operatorname{erfi}\left(\frac{1}{2} \left(k_1+k_2\right) \sqrt{T}\right)
\\
&\hspace{-2cm}+\frac{\lambda^2  e^{\frac{\lambda ^2 T}{4}} \operatorname{erfc}\left(\frac{\lambda  \sqrt{T}}{2}\right)}{2 \left(\lambda ^2+k_1^2\right)^2 \left(\lambda ^2+k_2^2\right)^2 \left(\lambda ^2+\left(k_1+k_2\right)^2\right)} 
\Bigg\{2 \left(k_1^2+k_2 k_1+k_2^2\right) \lambda ^6 T
\\
&\hspace{-1.cm}+\lambda ^4 \Big[\left(k_1^2+k_2 k_1+k_2^2\right)^2 T-10 k_1 k_2\Big]+k_1 k_2 \lambda ^2 \Big[k_1 k_2 \left(k_1+k_2\right)^2 T
\\
&\hspace{-0.0cm}-2 \left(k_1^2-4 k_2 k_1+k_2^2\right)\Big]+2 k_1^2 k_2^2 \left(2 k_1^2+3 k_2 k_1+2 k_2^2\right)+\lambda ^8 T\Bigg\}
\\
&\hspace{-2cm}-\frac{\lambda ^3 \sqrt{T}}{ \sqrt{\pi } \left(\lambda ^2+k_1^2\right) \left(\lambda ^2+k_2^2\right)}
+\frac{\lambda^2  e^{-\frac{1}{4} \left(k_1+k_2\right)^2 T}}{ k_1 k_2 \left[\lambda ^2+(k_1+ k_2)^2\right]}.
\end{split}
\end{align}
\end{theorem}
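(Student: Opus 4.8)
The plan is to start from the first-order term of Corollary~\ref{cor:perturbative} and take its smeared functional trace,
\[
\Tr\!\left(f(x)K^{(1)}(x,y;T;\ope)\right)=\int_{\mathbb{R}}\dx\,f(x)\int_0^T\dx[s]\int_{\mathbb{R}}\dx[z]\,\Kzeta(x,z;s)\,V(z)\,\Kzeta(z,x;T-s).
\]
The key move is to insert the closed form of Proposition~\ref{prop:hk_delta}, writing $\Kzeta=K_0+\widetilde K$, where $\widetilde K$ is the $\ampli$-dependent second term of Eq.~\eqref{eq:HK_delta}. The product $\Kzeta\,\Kzeta$ then splits into four pieces: the free term $K_0K_0$, the two mixed terms $K_0\widetilde K$ and $\widetilde K K_0$, and the doubly corrected term $\widetilde K\widetilde K$. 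I expect the first to reproduce the bulk kernel~\eqref{eq:order1_hk_f_bulk} and the remaining three to assemble into the surface kernel~\eqref{eq:order1_hk_f_boundary}.

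For the free piece I would pass to Fourier space, $K_0(x,z;s)=\int_{\mathbb{R}}\dk\,e^{\mathi k(x-z)-k^2 s}$, and perform first the Gaussian integral over $z$, where the product of the two free kernels collapses to a single momentum $k$ conjugate to $V$. The remaining proper-time integral is $\int_0^T\dx[s]\,e^{-k^2 s(T-s)/T}$, which after completing the square in $s$ becomes $e^{-k^2 T/4}$ times a Gaussian with positive exponent, i.e.\ an $\operatorname{erfi}$. Collecting the prefactors reproduces exactly $\hkf_{1,M}^{(f)}(\der_x;T)$ acting on $V(x)$ under $\int_{\mathbb{R}}\dx\,f(x)(\cdots)$, with $\der_x$ conjugate to the argument of $V$.

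The surface pieces are the laborious part. There I would Fourier-transform both $f$ and $V$ and shift $x\mapsto x+\pos$, $z\mapsto z+\pos$; since $\widetilde K$ depends on the translation-invariant combination $|x-\pos|+|z-\pos|$, this extracts an overall factor $e^{\mathi(k_1+k_2)\pos}$, which is precisely the Fourier-space signature of the localization on $\Sigma$ and lets the result be re-read as $\hkf_{1,\Sigma}^{(f)}(\der_y,\der_z;T;\ampli)f(y)V(z)\big|_{y=z=x}$ integrated over $\Sigma$. The shifted position integrals against $|x|+|z|$ become half-line (cosine-type) integrals, which I would evaluate by splitting $\mathbb{R}$ into the regions $x\gtrless 0$ and $z\gtrless 0$; each region reduces to an elementary Gaussian once the auxiliary variable $u$ of Eq.~\eqref{eq:HK_delta} is kept. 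I would then carry out the $u$-integral, whose weight $e^{-u\ampli/2}$ generates the rational denominators $(\ampli^2+k_i^2)$ and $(\ampli^2+(k_1+k_2)^2)$, and finally the proper-time integral over $s$, which yields the $\operatorname{erfi}$ terms and, for $\widetilde K\widetilde K$, the factor $e^{\ampli^2 T/4}\operatorname{erfc}(\ampli\sqrt{T}/2)$. The relabelling $s\to T-s$ interchanges the two kernels, and hence $k_1\leftrightarrow k_2$, which explains the symmetrization appearing in~\eqref{eq:order1_hk_f_boundary}.

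The main obstacle I anticipate is organizational rather than conceptual: for each of the three surface products one must track the four quadrant contributions, the nested $u$- and $s$-integrations, and the partial-fraction structure in $\ampli$, and then sum them into the single closed expression~\eqref{eq:order1_hk_f_boundary}. No individual integral is difficult, but the cancellations between the mixed and doubly corrected contributions must be handled carefully in order to land on the stated combination of rational prefactors with $\operatorname{erfi}$, $\operatorname{erfc}$ and the $\sqrt{T/\pi}$ term.
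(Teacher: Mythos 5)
Your proposal is correct and follows essentially the same route as the paper: a direct evaluation of the linear-order term of Corollary~\ref{cor:perturbative}, which the paper's proof states in a single line and you flesh out via the split $\Kzeta=K_0+\widetilde{K}$ and Fourier/Gaussian integration (including the integration by parts needed so that, in the bulk term, all derivatives end up acting on $V$ rather than $f$). Your bulk computation lands exactly on $\hkf_{1,M}^{(f)}$, and your accounting of where the $(\lambda^2+k_i^2)$ denominators, the $\operatorname{erfi}$ and $\operatorname{erfc}$ factors, and the $k_1\leftrightarrow k_2$ symmetrization originate is consistent with the structure of Eq.~\eqref{eq:order1_hk_f_boundary}.
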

\begin{proof}
A direct way to prove this theorem is to consider Corollary~\ref{cor:perturbative} at linear order in $V$. 
 For the bulk contribution we have performed 
all the necessary integrations by part in order to remove all the derivatives acting on $f$; 
this is possible because of the assumed properties of $f$ and $V$.
Of course, a similar procedure can not be implemented for the boundary terms.
\end{proof}

Some comments are in order. 
First,
the bulk contributions are independent of the delta potential. 
Indeed, if we expand for small coupling $\ampli$, we get
\begin{align}\label{eq:order1_free}
 \begin{split}
\Tr\left( f(x) \Kope^{(1)}(x,y;T) \right)
 &= \int_{\mathbb{R}} \dx  f(x) \frac{ e^{\frac{\partial_x^2 T}{4}} \operatorname{erfi}\left(\frac{\sqrt{-T \partial_x^2 } }{2}\right) } { 2 \sqrt{-\partial_x^2}} V(x)+O\left(\lambda ^1\right) 
 \\
 &\hspace{-2cm}= \int_{\mathbb{R}} \dx f(x) \frac{\sqrt{T}}{4}\left( \sum_{n=0}^{\infty} \frac{1}{\Gamma\left(n+\frac{3}{2}\right)} \left(\frac{{T\partial_x^2}}{4}\right)^{n}\right)  V(x)+O\left(\lambda ^1\right).
 \end{split}
 \end{align}
These are standard formulae in the literature. 
As an example, the small-propertime ($T$) expansion 
in the second line is consistent with the results contained in~\cite{Vassilevich:2003xt} for the GSDW coefficients up to $a_6$;
notice that, order by order in $T$, it depends only on integer powers of $\partial_x^2$, i.e. it is made of local contributions.
Moreover, the formulae in Eq.~\eqref{eq:order1_free} are compatible with the second order resummed result à la Barvisnky--Vilkovisky~\cite{Barvinsky:1987uw}, 
see \cite{Vassilevich:2003xt} and the discussion in Sec.~\ref{sec:second_order} of the present manuscript.
 
Second, even if the presence of rational functions of the $k_i$ may induce one to think that nonlocalities may be present
in the small-T expansion of the surface contributions,
an explicit calculation shows that all the corresponding coefficients are made of local invariants.
Even if the computations are lengthy, this can be straightforwardly seen from the following corollary 
of Theorem~\ref{th:firs_order}.
\begin{corollary}\label{cor:order1_SDW_coef}
 The surface contributions to the GSDW coefficients associated to the operator $\ope$, at linear order in $V$, 
vanish for $n<4$. For $n\geq 4$ an even number, they read
\begin{align}\label{eq:order1_SDW}
 \begin{split}
&a_{n,\Sigma}^{(1)}(f;\ope)
=
\Bigg\lbrace
\frac{ 2^{1-n}\lambda}{\Gamma \left(\frac{n+1}{2}\right)}
\Bigg[
-\frac{(-1)^{ n} (n-1) \lambda ^{n}}{\left(\lambda ^2-\partial _1^2\right) \left(\lambda ^2-\partial _2^2\right)}
+\frac{\left(\partial _1+\partial _2\right){}^{n}}{\partial _1 \partial _2 \left(\lambda ^2-(\partial _1+\partial _2)^2\right)} \\
 &
 \hu+\frac{(-1)^{ n} \partial _1 \partial _2 \Big[-5 \lambda ^4+\lambda ^2 \partial _2^2+2 \partial _1 \partial _2 \left(\partial _2^2-2 \lambda ^2\right)+\partial _1^2 \left(\lambda ^2+3 \partial _2^2\right)+2 \partial _1^3 \partial _2\Big] \lambda ^{n}}{\left(\lambda ^2-\partial _1^2\right){}^2  \left(\lambda^2- (\partial _1+\partial _2)^2\right) \left(\lambda ^2-\partial _2^2\right){}^2}
 \\
 &
 \hu+\Bigg(
 \frac{\left(-\lambda ^2 \partial _1-\lambda ^2 \partial _2+\partial _1^3-\partial _2 \partial _1^2\right) \partial _1^{n}}{\partial _2 \left(\partial _1^2-\partial _2^2\right) \left(\lambda ^2-\partial _1^2\right){}^2}
 +\{\partial_1\leftrightarrow \partial_2 \}
 \Bigg)
 \Bigg]f(x_1)V(x_2)\Bigg\vert_{x_i=x}
 \Bigg\rbrace[\Sigma]
 ,
 \end{split}
 \end{align}
 while, for $n\geq 5$ and odd, we have
 \begin{align}\label{eq:order1_SDW2}
 \begin{split}
& a_{n,\Sigma}^{(1)}(f;\ope)
 =\Bigg\lbrace
 \frac{2^{1-n} \lambda ^2}{\Gamma \left(\frac{n+1}{2}\right)}
 \Bigg[
 \Bigg( \frac{\left(\lambda ^2-\partial _1^2+2 \partial _1 \partial _2\right) \partial _1^{n}}{\partial _2 \left(\partial _1^2-\partial _2^2\right) \left(\lambda ^2-\partial _1^2\right){}^2}
 + \{\partial_1\leftrightarrow \partial_2\}
 \Bigg)
 \\
 &
 \hu-\frac{\partial _1 \partial _2 \Big[-5 \lambda ^4+\lambda ^2 \partial _2^2+2 \partial _1 \partial _2 \left(\partial _2^2-2 \lambda ^2\right)+\partial _1^2 \left(\lambda ^2+3 \partial _2^2\right)+2 \partial _1^3 \partial _2\Big] \lambda ^{n-1}}{\left(\lambda ^2-\partial _1^2\right){}^2 \left(\lambda ^2-\partial _2^2\right){}^2 \left(\lambda ^2-\left(\partial _1+\partial _2\right){}^2\right)}
 \\
 &
 \hu+\frac{\left(\partial _1+\partial _2\right){}^{n-1}}{\partial _1 \partial _2 \left(-\lambda^2 + (\partial _1+\partial _2)^2\right)}
 +\frac{(n-1) \lambda ^{n-1}}{\left(\lambda ^2-\partial _1^2\right) \left(\lambda ^2-\partial _2^2\right)}
 \Bigg]
 f(x_1)V(x_2)\Bigg\vert_{x_i=x}
 \Bigg\rbrace[\Sigma].
\end{split}
\end{align}
\end{corollary}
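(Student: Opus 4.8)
The plan is to read off the coefficients $a_{n,\Sigma}^{(1)}$ directly from the small-propertime behaviour of the surface kernel of Theorem~\ref{th:firs_order}. By the defining expansion~\eqref{eq:hk_expansion}, now with $m=1$, the coefficient $a_{n,\Sigma}^{(1)}$ is precisely the coefficient of $T^{(n-1)/2}$ in the asymptotic expansion, as $T\downarrow 0$, of the surface term of~\eqref{eq:order1_hk_f}. Concretely, I would expand $\hkf_{1,\Sigma}^{(f)}(k_1,k_2;T;\lambda)$ of~\eqref{eq:order1_hk_f_boundary} in powers of $\sqrt{T}$, isolate the coefficient of $T^{(n-1)/2}$, and then promote $k_1\to\der_{x_1}$ and $k_2\to\der_{x_2}$, with $\der_{x_i}=-\mathi\partial_{x_i}$, letting the resulting symbol act on $f(x_1)V(x_2)$ before the coincidence limit $x_i=x$ and the integration over $\Sigma$. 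Note that under this promotion $\lambda^2+k_i^2\to\lambda^2-\partial_i^2$, which already reproduces the denominators displayed in the statement.

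First I would record the small-$T$ expansions of the elementary blocks appearing in~\eqref{eq:order1_hk_f_boundary}. The combination $e^{-k^2T/4}\operatorname{erfi}(k\sqrt{T}/2)$ is, up to a constant, a Dawson function, so its Taylor series contains only the odd powers $k^{2j+1}T^{(2j+1)/2}$ and hence feeds exclusively the half-integer powers of $T$. The combination $e^{\lambda^2T/4}\operatorname{erfc}(\lambda\sqrt{T}/2)$ expands from $T^0$ onwards, mixing integer and half-integer powers of $T$ dressed with powers of $\lambda$; the explicit $\sqrt{T}$ term contributes a single half-integer power, while the last term $\lambda^2 e^{-(k_1+k_2)^2T/4}/(k_1k_2[\lambda^2+(k_1+k_2)^2])$ contributes a tower of integer powers. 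Substituting these series and sorting by the total power of $T$ then produces, for each $n$, a rational function of $(k_1,k_2,\lambda)$ multiplying $T^{(n-1)/2}$.

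The resulting bookkeeping explains the two features of the statement. First, tracking the lowest powers shows the vanishing of $a_{n,\Sigma}^{(1)}$ for $n<4$: the candidate contributions at orders $T^0$, $T^{1/2}$ and $T^1$ cancel among the several terms of~\eqref{eq:order1_hk_f_boundary}, a cancellation I would verify explicitly order by order. Second, the split into the two families~\eqref{eq:order1_SDW} and~\eqref{eq:order1_SDW2} is dictated by the parity of $(n-1)/2$: the integer powers ($n$ odd) are fed by the last exponential term and by the integer part of the $\operatorname{erfc}$ block, whereas the half-integer powers ($n$ even) are fed by the Dawson-type blocks, the pure $\sqrt{T}$ term and the half-integer part of the $\operatorname{erfc}$ block, which naturally yields the two distinct rational structures.

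The hard part will be to establish locality. The coefficient of $T^{(n-1)/2}$ is a genuine rational function of the $k_i$, with apparent poles at $k_1^2=k_2^2$, at $k_i=0$ and at $\lambda^2+k_i^2=0$, so at face value it would represent a nonlocal surface invariant. However, the general theory guarantees that every GSDW coefficient is a local invariant, so these singularities must cancel. The key step is therefore to show that, treating $\partial$ as a symbol, each numerator is exactly divisible by its denominator, so that the quotients in~\eqref{eq:order1_SDW} and~\eqref{eq:order1_SDW2} terminate and define honest local differential operators acting on $f$ and $V$, exactly in the spirit of the remark following Corollary~\ref{cor:order0_SDW_coef}. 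I would achieve this by combining the individual series so that the spurious $1/(k_1^2-k_2^2)$ and $1/(k_1k_2)$ factors cancel against matching zeros of the numerators, leaving only the benign $(\lambda^2-\partial_i^2)$ denominators, whose exact divisibility can then be checked for each $n$ by performing the division symbolically. A comparison of the first few coefficients against those of Ref.~\cite{Gilkey:2001mj} furnishes an independent consistency check.
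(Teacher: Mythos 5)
Your proposal is correct and takes essentially the same route as the paper, which derives Corollary~\ref{cor:order1_SDW_coef} precisely by expanding the surface kernel $\hkf_{1,\Sigma}^{(f)}$ of Theorem~\ref{th:firs_order} in powers of $\sqrt{T}$, reading off the coefficient of $T^{(n-1)/2}$, substituting $k_i\to\der_{x_i}$, and checking that the apparent poles at $k_1^2=k_2^2$, $k_i=0$ and the low-order ($n<4$) contributions cancel, leaving exact polynomial divisions and hence local invariants. The only small slip in your bookkeeping is that the integer-power (odd-$n$) tower is fed not only by the last exponential term and the integer part of the $\operatorname{erfc}$ block but also by the pure-exponential (non-$\operatorname{erfi}$) term on the second line of Eq.~\eqref{eq:order1_hk_f_boundary}, which is the source of the odd $\partial_1^{\,n}$ numerators in Eq.~\eqref{eq:order1_SDW2}; your stated procedure of sorting all the series by the total power of $T$ would catch this automatically.
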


\begin{comm}\label{comm:order1}
 \old{ 
 The discussion in Remark~\ref{comm:order0} trivially extends to this order,
  with the addition that the coefficients will in general possess a functional dependence on derivatives of $V$.
  }
\end{comm}

Coming back to Eq.~\eqref{eq:order1_hk_f_boundary},  one can corroborate it by comparing with several previously known results.
To begin with, if we consider a constant potential $V\equiv V_0$, then we obtain
\begin{align}
 \Tr\left( f(x) K^{(1)}(x,y;T;\ope) \right)
 &\overset{V\equiv V_0}{=}  T  V_0 \Tr\Big( \Kzeta(x,y;T) \Big),
 \end{align}
which is the correct result since in that case\old{we know the all-order expression $K_{V\equiv V_0}(\ope)= e^{\zeta TV_0} \Kzeta$}.
As an additional check, we can also remove the smearing function. 
In such a situation the result greatly simplifies.
\begin{corollary}
 Provided that the potential $V$ and its derivatives of any order decay sufficiently fast at infinity, 
 we can set the smearing function to unity in Eq.~\eqref{eq:order1_hk_f} and obtain
\begin{align}\label{eq:order1_no_smear}
 \begin{split}
\Tr\left( K^{(1)}(x,y;T;\ope) \right) &=\sqrt{\frac{T}{4\pi}}\int_{\mathbb{R}} \dx \,V(x) - \frac{\lambda  T }{2}\int_{\Sigma} \dxd[x][0] \frac{e^{\frac{ T\partial_x^2}{4}} }{   \left(-\partial_x^2+\lambda ^2\right)} 
 \\
 &\hspace{-2cm}\times \left\{\lambda   \left[1 -e^{\frac{1}{4} T \left(-\partial_x^2+\lambda ^2\right)} \operatorname{erfc}\left(\frac{\lambda  \sqrt{T}}{2}\right) \right]-\operatorname{erf}\left(\frac{\sqrt{T}\partial_x }{2}\right)\partial_x \right\}  V(x).
 \end{split}
\end{align}
In this case, the GSDW coefficients listed in Corollary~\ref{cor:order1_SDW_coef}, nonvanishing only for $n\geq 4$, simplify to
\begin{align}\label{eq:order1_coefficients_nof}
 a^{(1)}_{n,\Sigma}(1;\ope)&=
 \frac{2^{2-n}}{   \Gamma \left(\frac{n-1}{2}\right)}
 \int_{\Sigma}
 \frac{\dxd[x][0]}{\left(-\partial_x^2+\lambda ^2\right)}
 \begin{cases}
         -\Big( \lambda ^{n-1}+\lambda  \mathi^n (-\partial_x^2)^{(n-2)/2}\Big)  V,\; \text{for $n$ even},\\
         \Big(\lambda ^{n-1}+\lambda ^2 \mathi^{n-1} (-\partial_x^2)^{(n-3)/2}\Big) V,\;\text{for $n$ odd}.
        \end{cases}
\end{align}
\end{corollary}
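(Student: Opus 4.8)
The plan is to start from the smeared trace of Theorem~\ref{th:firs_order}, Eq.~\eqref{eq:order1_hk_f}, and to specialise it to $f\equiv1$, handling the bulk and the surface contributions separately. The hypothesis that $V$ and all its derivatives decay sufficiently fast at infinity plays here the role previously played by the compact support of $f$: it guarantees that the integrals over $\mathbb{R}$ still converge and, crucially, that the boundary terms generated by integrations by parts vanish. This is exactly what is needed to drop the derivatives that, in Eq.~\eqref{eq:order1_hk_f}, were kept acting on $f$.

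For the bulk term $\int_\mathbb{R}\dx\,f(x)\,\hkf_{1,M}^{(f)}(\der_x;T)V(x)$ I would note that, since $\hkf_{1,M}^{(f)}(k;T)$ in Eq.~\eqref{eq:order1_hk_f_bulk} is even in $k$, the operator $\hkf_{1,M}^{(f)}(\der_x;T)$ is a power series in $\der_x^2=-\partial_x^2$. With $f\equiv1$ every term carrying at least one factor $\der_x^2$ produces, after integration over $\mathbb{R}$, a vanishing boundary term, so only the constant part of the symbol survives and the bulk contribution reduces to $\hkf_{1,M}^{(f)}(0;T)\int_\mathbb{R}\dx\,V(x)$. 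The small-argument expansion $\operatorname{erfi}(z)=2z/\sqrt{\pi}+O(z^3)$ then gives $\lim_{k\to0}\hkf_{1,M}^{(f)}(k;T)=\sqrt{T/(4\pi)}$, reproducing the first term of Eq.~\eqref{eq:order1_no_smear}.

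For the surface term the key observation is that $\der_y$ acts only on $f(y)$, so setting $f\equiv1$ annihilates every positive power of $\der_y$ and effectively evaluates the symbol $\hkf_{1,\Sigma}^{(f)}(k_1,k_2;T;\lambda)$ at $k_1=0$, leaving an operator $\hkf_{1,\Sigma}^{(f)}(0,\der_z;T;\lambda)$ acting on $V$. The main obstacle is precisely this limit: several summands of Eq.~\eqref{eq:order1_hk_f_boundary} — the $\{k_1\leftrightarrow k_2\}$ partner of the bracket, the $\operatorname{erfi}$ term and the last rational term — carry explicit $1/k_1$ poles. I would therefore Laurent-expand each contribution around $k_1=0$ and verify that the residues cancel, so that the full symbol admits a finite limit; this cancellation is forced by the fact that $\hkf_{1,\Sigma}^{(f)}(\der_y,\der_z;T;\lambda)$ must be a bona fide operator. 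The $\operatorname{erfc}$ block and the $\sqrt{T}$ term of Eq.~\eqref{eq:order1_hk_f_boundary} are regular at $k_1=0$ and contribute directly; in particular, at $k_1=0$ the $\operatorname{erfc}$ numerator collapses to $\lambda^4T(\lambda^2+k_2^2)^2$, yielding the $\tfrac{\lambda^2T}{2}e^{\lambda^2T/4}\operatorname{erfc}(\lambda\sqrt{T}/2)/(\lambda^2+k_2^2)$ piece. Collecting the finite part, replacing $k_2\to\der_z=-\mathi\partial_z$ and using $\operatorname{erfi}(\mathi z)=\mathi\operatorname{erf}(z)$ to trade the imaginary error function for $\operatorname{erf}$ (the prefactor $e^{T\partial_x^2/4}$ recombining with $e^{T(-\partial_x^2+\lambda^2)/4}$ into $e^{\lambda^2T/4}$) reproduces the surface contribution of Eq.~\eqref{eq:order1_no_smear}.

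Finally, for the simplified coefficients of Eq.~\eqref{eq:order1_coefficients_nof} I would set $f\equiv1$ directly in Corollary~\ref{cor:order1_SDW_coef}, retaining only the $\partial_1^0$ part of Eqs.~\eqref{eq:order1_SDW} and~\eqref{eq:order1_SDW2} and disposing of the apparent $1/\partial_1$ factors by the same finite-limit argument as above; equivalently, one may Taylor-expand in $T$ the closed operator just obtained and read off the coefficient of $T^{(n-1)/2}$. Either route collapses the expressions onto a single rational factor $(-\partial_x^2+\lambda^2)^{-1}$ acting on $V$, the parity of $n$ selecting the two cases through the powers $\mathi^n$ and $\mathi^{n-1}$. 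I expect the bookkeeping of the $1/k_1$ (respectively $1/\partial_1$) poles, and the verification that their residues cancel, to be the only genuinely delicate point of the whole argument.
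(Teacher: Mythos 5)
Your proposal is correct and follows the same route the paper (implicitly) takes: the corollary is presented there as a direct specialization of Theorem~\ref{th:firs_order} to $f\equiv1$, with the bulk operator collapsing to its $k=0$ value of $\sqrt{T/4\pi}$ once total derivatives of $V$ are discarded, and the surface symbol evaluated at $k_1=0$. Your identification of the cancellation of the apparent $1/k_1$ poles as the only delicate step, and your check that the $\operatorname{erfc}$ numerator collapses to $\lambda^4 T(\lambda^2+k_2^2)^2$ at $k_1=0$, are consistent with the paper's own remark that such quotients of symbols are only formal and that the division is exact.
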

Notice that, as previously, the denominator in Eq.~\eqref{eq:order1_coefficients_nof} is formal: 
it is understood that one should first perform the division of the polynomials (which is exact for all $n$) before interpreting $\partial_x^2$ as a differential operator.
The GSDW coefficients up to $n=5$ coincide with the corresponding ones in Ref.~\cite{Bordag:1999ed}, 
while the $a_6$ term agrees with the result in Ref.~\cite{Vinas:2010ix}.

\old{One further comparison can be done employing the large $\lambda$ expansion;
we will postpone this analysis to Sec.~\ref{sec:bcs}.}

\section{Heat-kernel's trace: second order in $V$}\label{sec:second_order}
Increasing the order in the potential substantially increases the difficulty in the computation of the HK's trace.
To simplify the results to quadratic order in $V$ and
taking also into account that in physical applications it is customary to do so,
in the following we will set the smearing function aside.

\begin{theorem}\label{th:quadratic}
 The trace of the operator $\ope$'s HK, under the assumptions around Eq.~\eqref{eq:operator_D}, at quadratic order in $V$ and neglecting total derivatives, is given by 
\begin{align}
 \begin{split}\label{eq:order2_hk_f}
  \Tr\left(  K^{(2)}(x,y;T;\ope) \right) &= \int_\mathbb{R} \dx\, V(x) \hkf_{2,M}(\der_x;T) V(x)
  \\
  &\hu+ \int_{\Sigma}\dxd[x][0]  \hkf_{2,\Sigma}(\der_1,\der_2;T;\lambda) V(x_1) V(x_2)\Big\vert_{x_i=x},
 \end{split}
\end{align}
where the kernels are related to those present in the linear expansion:
\begin{align}\label{eq:relation_kernels}
 \hkf_{2,M}(k;T) =\frac{T}{2} \hkf_{1,M}^{(f)}(k ;T),\quad  \hkf_{2,\Sigma}(k_1,k_2;T;\lambda) = \frac{T}{2}\hkf_{1,\Sigma}^{(f)}(k_1,k_2;T;\lambda).
\end{align}
Correspondingly, the GSDW coefficients read
\begin{align}
 a_{n,\Sigma}^{(2)}(1;\ope)= \frac{1}{2} a_{n-2,\Sigma}^{(1)}(V;\ope).
\end{align}

\end{theorem}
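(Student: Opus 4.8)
The plan is to start from the $n=2$ term of the perturbative series in Corollary~\ref{cor:perturbative} and reduce its trace to the first-order trace already computed in Theorem~\ref{th:firs_order}. Setting $y=x$ in Eq.~\eqref{eq:perturbative_solution} and integrating over $x$, the two heat-kernel factors that carry the trace variable, namely $\Kzeta(z_2,x;T-s_2)$ and $\Kzeta(x,z_1;s_1)$, can be collapsed into a single one. This is the crucial input: since $\Kzeta$ is the heat kernel of the self-adjoint operator $\ope_{\zeta=0}=-\partial^2+\ampli\delta(\perpe[x]-\pos)$ and hence generates a semigroup, one has $\int_{\mathbb{R}}\dx[x]\,\Kzeta(z_2,x;T-s_2)\Kzeta(x,z_1;s_1)=\Kzeta(z_2,z_1;T-s_2+s_1)$. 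After this step the trace involves only two $\Kzeta$ factors and the double proper-time integral over the simplex $0\le s_1\le s_2\le T$, i.e.\ it already has the structure of a first-order expression.

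Next I would change variables to $\tau:=s_2-s_1$, on which the integrand depends exclusively (the surviving kernel arguments being $\tau$ and $T-\tau$). Integrating over the simplex produces the weight $\int_0^T\dx[\tau]\,(T-\tau)(\cdots)$, acting on the inner object
\begin{align*}
 J(\tau):=\int_{\mathbb{R}}\dx[z_1]\int_{\mathbb{R}}\dx[z_2]\,\Kzeta(z_2,z_1;T-\tau)\,V(z_1)\,\Kzeta(z_1,z_2;\tau)\,V(z_2).
\end{align*}
The key observation is that $J(\tau)$ is invariant under $\tau\mapsto T-\tau$: relabelling $z_1\leftrightarrow z_2$ and using the spatial symmetry $\Kzeta(a,b;t)=\Kzeta(b,a;t)$ that follows from Proposition~\ref{prop:hk_delta} gives $J(\tau)=J(T-\tau)$. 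Averaging the weight $(T-\tau)$ with its reflection $\tau$ then replaces it by the constant $T/2$, so that
\begin{align*}
 \Tr\left(K^{(2)}(x,y;T;\ope)\right)=\frac{T}{2}\int_0^T\dx[\tau]\,J(\tau).
\end{align*}

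The remaining integral $\int_0^T\dx[\tau]\,J(\tau)$ is, after a harmless relabelling of the external point and proper time, literally the first-order trace $\Tr(f(x)K^{(1)})$ of Theorem~\ref{th:firs_order} with the smearing function $f$ replaced by the potential $V$. This identification is legitimate because the surface kernel $\hkf_{1,\Sigma}^{(f)}$ is already symmetric under $k_1\leftrightarrow k_2$ and the bulk kernel $\hkf_{1,M}^{(f)}$ is even, so putting $V$ in both slots loses no information (modulo the total derivatives we discard). Reading off the bulk and surface pieces then yields the kernel relations of Eq.~\eqref{eq:relation_kernels}; the extra factor of $T$ shifts each power $T^{(n-m)/2}$ to $T^{(n+2-m)/2}$ while the $1/2$ supplies the numerical prefactor, giving at once $a_{n,\Sigma}^{(2)}(1;\ope)=\tfrac12\,a_{n-2,\Sigma}^{(1)}(V;\ope)$.

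The computation is mostly bookkeeping, and the one genuinely load-bearing step is the semigroup collapse of the two trace-carrying kernels, which is precisely what turns the three-kernel second-order object into a two-kernel first-order one. The remaining points demanding care are verifying that $J(\tau)$ is truly reflection-symmetric, so that the proper-time weight averages to the constant $T/2$ rather than to a $\tau$-dependent factor, and confirming that the substitution $f\mapsto V$ in the already-symmetric first-order kernels introduces nothing beyond total derivatives.
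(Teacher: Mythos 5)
Your proposal is correct and follows essentially the same route as the paper's proof: collapsing the trace via the semigroup property of $\Kzeta$, changing propertime variables to $s_2-s_1$, using the reflection symmetry of the integrand to replace the simplex weight by $T/2$, and identifying the result with $\Tr\left(f K^{(1)}\right)$ under $f\to V$. You merely make explicit two steps the paper leaves implicit (the semigroup collapse and the $\tau\mapsto T-\tau$ symmetry needed to average the weight), which is a welcome clarification rather than a deviation.
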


\begin{proof}
The correctness of Theorem~\ref{th:quadratic} can be shown by appealing to the perturbative expansion in Eq.~\eqref{eq:perturbative_solution}.
At quadratic order in $V$ we notice that the only relevant variable is $s_2-s_1$; changing variables we thus get
\begin{align}
 \begin{split}
  &\Tr\left(  K^{(2)}(x,y;T;\ope) \right) =\int_0^T \dx[s_2]\int^{s_2}_0 \dx[s_1] 
  \\
  &\hu \times \int_{\mathbb{R}^2}\dx[z_1]\dx[z_2]\,  V(z_1) \Kzeta(z_1,z_2;s_2-s_1) V(z_2)   \Kzeta(z_2,z_1;T-s_2+s_1)
  \\
  &=\int_0^T \dx[s_2] \int_0^{s_2} \dx[s_-] 
  \int_{\mathbb{R}^2}\dx[z_1]\dx[z_2]\,  V(z_1) \Kzeta(z_1,z_2;s_-) V(z_2)   \Kzeta(z_2,z_1;T-s_-)
  \\
  &= \frac{T}{2} \int_0^T \dx[s_-]  \int_{\mathbb{R}^2}\dx[z_1]\dx[z_2]\,  V(z_1) \Kzeta(z_1,z_2;s_-) V(z_2)   \Kzeta(z_2,z_1;T-s_-).
\end{split}
\end{align}
After replacing $V(z_1)\to f(z_1)$, this is proportional to the linear order expression with a smearing function, i.e. proportional to $\Tr\left( f K^{(1)}(\ope)\right)$.
The fact that we are neglecting total derivatives is a consequence of the 
integration by parts that we have performed in obtaining the kernels of the linear expression,
cf. the proof of Theorem~\ref{th:firs_order}. 
Finally, the relation between the GSDW coefficients follows directly from the second equality in Eq.~\eqref{eq:relation_kernels}.
\end{proof}
\begin{comm}\label{comm:order2}
 The discussion in Remark~\ref{comm:order1} trivially extends to this order.
\end{comm}

Theorem~\ref{th:quadratic} clarifies the comment made after Eq.~\eqref{eq:order1_free}:
it can be easily checked that $\hkf_{2,M}$ is the flat-space version of the Barvinsky--Vilkovisky kernel for the quadratic contribution in $V$, 
which is proportional by a factor $T/2$ to $\hkf_{1,M}^{(f)}$. 
Alternatively, one can perform a small-propertime expansion and see that the first terms (up to order $T^3$) 
agree with the contributions listed in~\cite{Vassilevich:2003xt}, of course after neglecting boundary terms.

An additional corroboration can be done by noting that in the limit when both $V$ factors becomes a constant $V_0$ we get
\begin{align}
 \Tr\left(  K^{(2)}(x,y;T;\ope) \right) &\overset{V\equiv V_0}{=}\frac{T^{2} V_0^2}{2}
   \Tr \left( \Kzeta(x,y;T)\right),
\end{align}
which again matches the expansion of $e^{\zeta T V_0 } \Kzeta^{}$, this time at quadratic order in $V_0$.
Instead, if just one of the $V$ becomes constant, we get the equivalent of Eq.~\eqref{eq:order1_no_smear}:
\begin{align}
\begin{split}
 \Tr\left(  K^{(2)}(x,y;T;\ope) \right) &=\frac{TV_0}{2} \Tr\left(  K^{(1)}(x,y;T;\ope) \right).
   \end{split}
\end{align}

As a matter of completeness, 
let us perform an expansion for small propertime;\old{the one for large coupling will be analysed 
in Sec.~\ref{sec:bcs}.}As a result we get the following surface contributions,
\begin{align}
\begin{split}
 &\Tr\left(  K^{(2)}(x,y;T;\ope) \right)\Big\vert_{\rm surface} 
 \\
 &=  \int_\Sigma \dxd[x][0] \Bigg[ -\frac{\lambda  }{4 \sqrt{\pi }} T^{5/2}+\frac{1}{16} \lambda ^2 T^{3}- \frac{\lambda }{24 \sqrt{\pi }}\left({\lambda ^2}{}+2 {\partial_1^2  }+{\partial_1 \partial_2 }\right)T^{7/2}
  \\
  &\hspace{2.0cm}+\frac{ \lambda ^2}{128} \left({\lambda ^2}+2 \partial_1^2  +\frac{1}{6} \partial_1 \partial_2 \right) T^{4}+\mathcal{O}(T^{9/2})\Bigg] V(x_1) V(x_2)\Bigg\vert_{x_i=x},
\end{split}
\end{align}
whose first term can be compared with the result in Ref.~\cite{Vinas:2010ix}.

\section{A connection with Dirichlet and Robin boundary conditions}\label{sec:bcs}

\subsection{Dirichlet boundary conditions}
\old{One further check of all the previous results can be performed considering the large-$\lambda$ expansion.
The first boundary contributions will be independent of $\lambda$ 
and are expected to be related to the HK of a free Laplacian with Dirichlet boundary conditions.
Indeed, it is well-known that in such a limit the operator $\ope_{\zeta=0}$ acts as a free Laplacian on two half-spaces: 
they share $\Sigma$ as boundary, on which Dirichlet boundary conditions are imposed. 
This is to be expected on physical grounds, since as $\lambda\to \infty$ the layer $\Sigma$ foreseeable becomes impenetrable. 
}

\old{
In the $\lambda\to\infty$, one can obtain the following kernels, 
which define the trace of the heat-kernel up to quadratic order in the potential $V$:
\begin{align}\label{eq:dirichlet0}
\hkf_{0,\Sigma}^{(f)}(k_1,k_2;T;\lambda\to\infty)
&=-\frac{1}{2} e^{-\frac{k^2 T}{4} },
\\
\begin{split}\label{eq:dirichlet1}
\hkf_{1,\Sigma}^{(f)}(k_1,k_2;T;\lambda\to\infty)&
\\
&\hspace{-2cm}=\left[ \frac{e^{-\frac{1}{4} \left(k_1+k_2\right){}^2 T} k_2^2 }{k_1 \left(k_1-k_2\right) k_2 \left(k_1+k_2\right)} \left(e^{\frac{1}{4} k_1 \left(k_1+2 k_2\right) T}-1\right)+ \{k_1\leftrightarrow k_2\}\right],
\end{split}
\\\label{eq:dirichlet2}
\hkf_{2,\Sigma}(k_1,k_2;T;\lambda\to\infty) 
&= \frac{T}{2}\hkf_{1,\Sigma}^{(f)}(k_1,k_2;T;\lambda\to\infty).
\end{align}
}

\old{
At first order in the potential, the expression simplifies if we take the smearing function as the unit function;
in that case, readable expressions are obtained for the surface contributions even at higher orders in $\lambda^{-1}$:
\begin{align}
\begin{split}\label{eq:order1_large_coupling}
 &\Tr\left( K^{(1)}(x,y;T;\ope) \right)\Big\vert_{\text{surface}} = 
  \int_{\Sigma}\dxd[x][0]
  \Bigg\lbrace -\frac{T e^{\frac{ T \partial_x^2}{4}}}{2}
 \\
 &\hu+ \left[\frac{\sqrt{T} }{ \sqrt{\pi}}+\frac{ T  e^{\frac{ T\partial_x^2}{4}}}{2 } \operatorname{erf}\left(\frac{ \sqrt{T} \partial_x }{2}\right) \partial_x  \right] \lambda^{-1}
 -\frac{ T e^{\frac{ T \partial_x^2}{4}}\partial_x^2}{2  } \lambda ^{-2}+\mathcal{O}(\lambda^{-3})
 \Bigg\rbrace V(x)
 \\
 &=- \frac{T}{2}\int_\Sigma \dxd[x][0] \Bigg[\sum_{j=0}^{\infty} \frac{T^j (\partial_x^2)^j}{4^j j!}V(x)\Bigg]+O\left(\lambda^{-1}\right).
 \end{split}
\end{align}
Taking the previous comments into account, the correctness of the coefficients multiplying the $V(L)$ and 
$\partial^2 V(L)$ factors can be respectively  verified by comparing with the Dirichlet results in Refs.~\cite{Vassilevich:2003xt} and \cite{Branson:1995cm}.
}

\old{
On the other side, at second order in $V$ we obtain that the surface terms are given by
\begin{align}
\begin{split}
 &\Tr\left(  K^{(2)}(x,y;T;\ope) \right)\Big\vert_{\rm surface} 
 \\
 &= \int_\Sigma \dxd[x][0]\Bigg\lbrace 
 \hkf_{2,\Sigma}(k_1,k_2;T;\lambda\to\infty) +
\\
&\hspace{2.0cm}\Bigg[ -\frac{\mathi T^{} \left[-2\partial_1^2 F\left(\frac{-\mathi \sqrt{T} \partial_1}{2}\right)+\left(-\partial_2^2+\partial_1^2\right) F\left(\frac{-\mathi}{2} \sqrt{T} \left(\partial_1+\partial_2\right)\right)\right]}{2\sqrt{\pi } \partial_1 \left(\partial_1-\partial_2\right) \partial_2} \lambda ^{-1}
 \\
 &\hspace{6.5cm}+ \{\partial_1\leftrightarrow \partial_2\}\Bigg]+\mathcal{O}(\lambda^{-2})\Bigg\rbrace
 V(x_1)V(x_2) 
 \\
 &=T^{-1/2}\int_\Sigma \dxd[x][0] 
 \Bigg\lbrace 
 -\frac{T^{5/2}}{4}-\frac{\left(8 \partial_1^2+5 \partial_2 \partial_1\right)}{64}  T^{7/2}-\frac{\left(3 \partial_1^4+7 \partial_2 \partial_1^3+5 \partial_2^2 \partial_1^2\right)}{192}  T^{9/2}
 \\
 &
 \\
 &-\frac{\left(16 \partial_1^6+54 \partial_2 \partial_1^5+112 \partial_2^2 \partial_1^4+69 \partial_2^3 \partial_1^3\right)}{12288}  T^{11/2}
 +\mathcal{O}(\lambda^{-1},T^{6})
\Bigg\rbrace V(x_1)V(x_2) \Bigg\vert_{x_i=x},
 \end{split}
 \end{align}
 where $F(\cdot)$ is Dawson's integral, $F(x):=e^{-x^2}\int_0^xe^{y^2}\dx[y]$.
In the third line, we have rendered explicit the small-propertime expansion of the Dirichlet limit; 
one can readily see that the coefficient accompanying $\int_\Sigma \dxd[x][0]\, V^2(x)$ agrees with the expression in Ref.~\cite{Branson:1995cm}.
}

\old{
Notice that in all these expansions there appear no terms with an odd total number of derivatives.
In fact, they simply vanish in our case, 
because of a cancellation between contributions from the two different semi-spaces:
this is a consequence of the fact that the outward normal-pointing vector has different sign on the two sides of $\Sigma$. 
}

\old{
More generally, we can write down the expressions of the GSDW coefficients in the $\lambda\to\infty$ limit.
For $n$ even they vanish, what can be simply proven by dimensional arguments, taking into account the \hidepdo{preceding}{precedent} paragraph. 
For $n$ odd they are given by
\begin{align}\label{eq:dirichlet_SDW}
 \begin{split}
a_{n,\Sigma}^{(0)}(f;\ope_{\lambda\to\infty})
&=
\Bigg\lbrace
\Bigg[
-\frac{2^{-n} (\partial)^{n-1} }{\Gamma \left(\frac{n+1}{2}\right)}
 \Bigg]f(x)
 \Bigg\rbrace[\Sigma]
 ,
 \\
 a_{n,\Sigma}^{(1)}(f;\ope_{\lambda\to\infty})
&=
\Bigg\lbrace
\Bigg[
\frac{ \left(\partial _2-\partial _1\right) \left(\partial _1+\partial _2\right){}^n+\partial _1^{n+1}-\partial _2^{n+1}}{2^{n-1} \partial _1 \partial _2 \left(\partial _1^2-\partial _2^2\right) \Gamma \left(\frac{n+1}{2}\right)}
\Bigg]f(x_1)V(x_2)\Bigg\vert_{x_i=x}
 \Bigg\rbrace[\Sigma]
 ,
 \\
 a_{n,\Sigma}^{(2)}(1;\ope_{\lambda\to\infty})
&=
 \frac{1}{2}a_{n-2,\Sigma}^{(1)}(V;\ope_{\lambda\to\infty})
 .
 \end{split}
 \end{align}
 One can rephrase this result in the following way.
 Consider a smooth manifold $M_0$ with boundary $\partial M_0=\Sigma$.
 Let us call $a_{n,\partial M_0}(f,\ope_{\lambda=0},\mathcal{B}_D)$ the boundary contributions to  the $n$th HK coefficient
 of the operator $\ope_{\lambda=0}$ restricted to $M_0$, with Dirichlet boundary conditions
 imposed on $\partial M_0$:
 \begin{align}\label{eq:dirichlet_bc}
  \mathcal{B}_D \phi =\phi\vert_{\partial M_0}=0.
 \end{align}
 Then, the coefficients satisfy
\begin{align}\label{eq:delta_dirichlet}
 a_{n,\partial M_0}(f,\ope_{\lambda=0},\mathcal{B}_D) &=\frac{1}{2}a_{n,\Sigma}(f,\ope_{\lambda\to\infty})+\cdots,
\end{align}
where the dots denote terms with a functional structure different from the ones already present in $a_{n,\Sigma}(f,\ope_{\lambda\to\infty})$.
}

\subsection{Robin boundary conditions}
\old{
We will dedicate this section to an elegant argument that links some heat-kernel coefficients for semitransparent conditions 
with those corresponding to Dirichlet and Robin boundary conditions. 
}

\old{To be more explicit, following Ref.~\cite{Gilkey:2001mj}, 
consider again a smooth manifold $M_0$ with boundary $\partial M_0=\Sigma$.
Afterwards, glue two copies of the former along their common boundary, giving rise to the manifold $M$.
In this construction, take a Laplace operator $D$, defined on the glued space $M$, which satisfies the following two conditions: it
implies semitransparent boundary conditions on $\Sigma$ (with parameter $\lambda$)
and reduces to the Laplace operator $D_0$ in both copies of $M_0$. 
Additionally, consider a smearing function $f$ that is even upon reflections on $\Sigma$.
Under rather general assumptions, the formula proved in Ref.~\cite{Gilkey:2001mj} is that 
the surface contributions to the heat-kernel coefficients satisfy
\begin{align}\label{eq:delta_dirichlet_robin}
 a_{n,\Sigma}(f,D,\lambda) = a_{n,\partial M_0}(f_0,D_0,\mathcal{B}_D)+a_{n,\partial M_0}(f_0,D_0,\mathcal{B}_S),
\end{align}
where $\mathcal{B}_D$ imply Dirichlet boundary conditions for the operator $D_0$ and $\mathcal{B}_S$ the following Robin ones\footnote{\old{The derivative in the normal direction to $\partial M_0$ is denoted by $\partial_n$.}}:
\begin{align}
\quad \mathcal{B}_S \phi =(\partial_n+S)\phi\vert_{\partial M_0}=0,\quad S:=-\lambda/2.
\end{align}
}
\old{
This result is applicable to our particular case, as long as one considers $V(x)$ and $f(x)$ that are even under reflection across $\Sigma$.
One should notice that the latter restriction means that terms containing $\partial^n V$ or $\partial^n f$ evaluated on the boundary, with $n$ odd, 
will trivially vanish.
Taking this into account and considering the coefficients for Dirichlet and Robin boundary conditions listed in Ref.~\cite{Vassilevich:2003xt}, 
it is a straightforward task to check that our first HK coefficients, both at zeroth and first order in $V$, satisfy the relation~\eqref{eq:delta_dirichlet_robin}.
}

\old{
Furthermore, one can invert the argument and obtain an infinite number of terms corresponding to HKs with Robin boundary conditions.
Indeed, applying the result~\eqref{eq:delta_dirichlet_robin} to our operator $\ope$ (therefore assuming $V$ and $f$ even) and substituting the Dirichlet contribution with the help of Eq.~\eqref{eq:delta_dirichlet}, we obtain
\begin{align}\label{eq:delta_robin}
 a_{n,\partial M_0}(f,\ope_{\lambda=0},\mathcal{B}_S)=a_{n,\Sigma}(f,\ope,\lambda) -\frac{1}{2}a_{n,\Sigma}(f,\ope_{\lambda\to\infty})+\cdots,
\end{align}
where the coefficients in the RHS have been computed in the previous sections (up to second order in $V$) and the dots denote terms with a functional structure different from the ones already present in $a_{n,\Sigma}(f,\ope,\lambda)$ or $a_{n,\Sigma}(f,\ope_{\lambda\to\infty})$.
}
\section{Surface form factors in quantum field theory}\label{sec:qft}

As a simple, albeit conceptually rich application of the preceding results, let us now
consider a quantum scalar field in  $d$-dimensional flat Euclidean space. We will couple it quadratically to the background field $\sigma$, 
according to the following action:
\begin{align}\label{eq:action}
 S := \frac{1}{2}\int_{\mathbb{R}^d} \dxd[x][d] \bigg[ (\partial\phi)^2 +m^2 \phi^2 + \lambda \delta(\perpe[x]-L) \phi^2+ \phi^2 \sigma^2 \bigg].
\end{align}
For a general discussion of this type of theories with and without the boundary term see respectively Refs.~\cite{Bordag:2004rx,Toms:2006re,Toms:2012dc} and~\cite{Mazzitelli:2011st,Franchino-Vinas:2021lbl}.
Following the terminology in the QFT literature, the negative of the Laplacian  will be denoted $\Box:=-\partial^2$.
 As usually, 
the one-loop contribution to the effective action can be written in terms of the operator of quantum fluctuations~\cite{Itzykson},
\begin{align}\label{eq:qft_operator}
 A:=\Box +m^2 + \lambda \delta(\perpe[x]-L) +  \sigma^2 ,
\end{align}
either as a function of its determinant or, introducing an integral over Schwinger's propertime $T$,
in terms of its HK:
\begin{align}
 \begin{split}\label{eq:one_loop_EA}
\Gamma_{\rm 1-loop}&= \frac{1}{2} \operatorname{Log} \operatorname{Det} A
 = - \frac{1}{2} \int_0^\infty \frac{\dx[T] }{T} \operatorname{Tr} e^{-T A}.
 \end{split}
 \end{align}
\old{Recall that, as discussed in Sec.~\ref{sec:introduction}, in the following we will assume $\lambda\geq0$;
otherwise, the integrals in the propertime would be ill-defined.}

 Using the formulae for the HK developed in the previous sections, 
 we can recast the one-loop effective action as\footnote{The zeroth order in $\sigma$ can be read from Eq.~\eqref{eq:HK_delta_trace}.}
\old{
\begin{align}
  \begin{split}\label{eq:ea_form_factors}
&\Gamma_{\rm 1-loop}= \int_{\mathbb{R}^d} \dxd[x][d] \, \bigg[ F^{(0)}_{M}(m) + F^{(1)}_{M}(m) \sigma^2(x) + \sigma^2(x) F^{(2)}_{M}\big(\Box,m\big) \sigma^2(x)\bigg]
  \\
  &\hspace{0.5cm}+ \int_\Sigma \dxd[x][d-1] \bigg[ F^{(0)}_{\Sigma}(m,\lambda)+
  F^{(1)}_{\Sigma}\big(\der_{\perpe[y]},m,\lambda\big) \sigma^2(y)
  \\
  &\hspace{1.5cm}+
  F^{(2)}_{\Sigma}\big(\der_{\perpe[z]},\der_{\perpe[y]},m,\lambda\big) \sigma^2(z)\sigma^2(y)
  \bigg]_{z=y=x}+ \mathcal{O}\left(\sigma^6, \partial_{\parallel}\sigma\right),
 \end{split}
 \end{align}
 where}$F^{(i)}_{M}$ and $F^{(i)}_{\Sigma}$ are called form factors.
 In Eq.~\eqref{eq:ea_form_factors} we are neglecting powers of the field $\sigma$ higher than four, 
 as well as all possible contributions involving its partial derivatives with respect to the directions tangent to the surface $\Sigma$.
 For our purposes, this will turn out to be enough.
 
 A closed expression for the form factors in arbitrary dimensions can be obtained; 
 given that they are rather lengthy, we prefer to leave them to App. \ref{app:form_factors_d}.
 Instead, we will display here  the corresponding formulae in $d\equiv 4$ dimensions, 
 focusing on the terms that require to undergo a renormalization process; 
they are 
\begin{align}\label{eq:f0m}
F^{(0)}_{M}(m) &=
\frac{m^4}{128 \pi ^2 }\left[\frac{4}{(d-4)}+{2 \log \left(\frac{m^2}{4\pi\mu^2}\right)+2 \gamma -3}{} \right],
\\
\label{eq:f0s}
F^{(0)}_{\Sigma}(m,\lambda) &=\frac{\lambda(6 m^2-\lambda ^2)}{192 \pi ^2 }
\Bigg[ \frac{2}{{(d-4)}}+2\gamma-\frac{8}{3}+ \log \left(\frac{m^2}{\pi\mu ^2}\right) \Bigg] + C^{(0)}_\Sigma(m,\lambda),
\\
\label{eq:f1s}
F^{(1)}_{\Sigma}(k,m,\lambda) &=\frac{\lambda }{32 \pi ^2 }\left[\frac{2}{(d-4)}+\gamma +\log \left(\frac{m^2}{4 \pi  \mu^2}\right)\right] 
+
  C^{(1)}_\Sigma(k,m,\lambda),
\\
\label{eq:f1m}
F^{(1)}_{M}\big(m\big)&=\frac{m^2}{32 \pi ^2 } \left[\frac{2}{(d-4)}+\gamma -1+ {\log \left(\frac{m^2}{4 \pi  \mu^2}\right)}{}\right],
\\
 \begin{split}
\label{eq:f2m}
F^{(2)}_{M}\big(k^2,m\big)&=\frac{1}{64 \pi ^2}\left[\frac{2}{ (d-4)}+\gamma  +\log \left(\frac{m^2}{4\pi\mu^2}\right)\right]
+C^{(2)}_M(k^2,m),
\end{split}
\end{align}
where $\gamma$ is the Euler--Mascheroni constant\old{and $\mu$ is an arbitrary constant with units of mass (introduced to render the argument of the logarithm in Eq.~\eqref{eq:one_loop_EA} dimensionless).}Additionally,
we have used dimensional regularization\footnote{In this simple example, it proves convenient to introduce dimensional regularization by just modifying the HK's leading power of the propertime to be proportional to $(4\pi T)^{-d/2}$.
Alternatively, one could modify the power of $T$ in the denominator of Eq.~\eqref{eq:one_loop_EA} to be $T^s$ and take afterwards the limit $s\to1$.} and defined the functions
\begin{align}
  C^{(0)}_\Sigma(m,\lambda):&=4 \pi  m^3-\lambda  m^2- \left(\lambda ^2-4 m^2\right)^{3/2} \operatorname{arctanh} \left(\sqrt{1-\frac{4 m^2}{\lambda ^2}}\right),
\\
\begin{split}
 C^{(1)}_\Sigma(k,m,\lambda):&=\frac{\lambda}{32 \pi ^2 \left(k^2+\lambda ^2\right)}\Bigg\lbrace 
   2 \lambda  \sqrt{\lambda ^2-4 m^2} \operatorname{arccoth}\left(\frac{\lambda }{\sqrt{\lambda ^2-4 m^2}}\right)
    \\
   &\hspace{-1.5cm} +\pi  \lambda  \sqrt{k^2+4 m^2}+2 \sqrt{k^2 \left(k^2+4 m^2\right)} \operatorname{arctanh}\left(\sqrt{\frac{k^2}{k^2+4 m^2}}\right)  
 \Bigg\rbrace-\old{\frac{\lambda }{16 \pi ^2}},
 \end{split}
 \\
 C^{(2)}_M(k^2,m):&=\frac{1}{32 \pi ^2} 
\left[ {\left(\frac{4 m^2}{k^2}+1\right)}^{1/2}\, \operatorname{arcsinh}\left( \sqrt{\frac{k^2}{4m^2}}\right) -1 \right].
\end{align}

\subsection{The renormalization}\label{sec:renormalization}

The renormalization process follows now by absorbing the infinities of the theory into the dressed coupling constants. 
In the minimal subtraction (MS) scheme, one would just introduce counterterms to cancel the negative powers of $(d-4)$;
this is the path followed for example in Ref.~\cite{Bordag:2004rx}.\old{Explicitly, 
one should enlarge the Lagrangian of departure to include also terms with functional dependence equal to those of the necessary counterterms~\cite{Peskin:1995ev},
 \begin{align}\label{eq:general_action}
 \begin{split}
S_0 :&= \frac{1}{2}\int_{\mathbb{R}^4} \dxd[x][4] \bigg[ (\partial\phi)^2 +m^2 \phi^2 + \phi^2 \sigma^2 +a_{M,0}^{(0)}+a_{M,0}^{(1)} \sigma^2 +a_{M,0}^{(2)} \sigma^4\bigg]
\\
&\hu+\frac{1}{2}\int_{\Sigma} \dxd[x][3] \bigg[ \lambda  \phi^2 +a_{\Sigma,0}^{(0)} +a_{\Sigma,0}^{(1)} \sigma^2 \bigg].
\end{split}
\end{align}
The coefficients $\delta a_{M,0}^{(0)}$ and $\delta a_{\Sigma,0}^{(0)}$ correspond to the cosmological constant and the surface tension of $\Sigma$ (both up to some proportional factor),
while $\delta a_{M,0}^{(1)}$, $\delta a_{M,0}^{(2)}$ and $\delta a_{\Sigma,0}^{(1)}$  are the mass of the field $\sigma$, the coupling of a quartic self-interaction for $\sigma$
and a coupling of $\sigma$ to the boundary.
Notice that, in dimensional regularization and four dimensions, there would appear no further terms stemming from the neglected factors which are $\mathcal{O}(\sigma^6,\partial_{\parallel}\sigma)$; 
this can be seen by dimensional counting or by direct computation~\cite{Bordag:2004rx}. 
However, some further terms may be needed in other regularization schemes.
}

\old{
In dimensional regularization and a prescription in which we absorb all the $k$ independent one-loop corrections, 
the corresponding counterterms can be read from the Eqs.~\eqref{eq:f0m}, \eqref{eq:f0s}, \eqref{eq:f1s}, \eqref{eq:f1m} and \eqref{eq:f2m},
to wit:
\begin{align}\label{eq:counterterms_nonminimal}
\begin{split}
\delta a^{(0)}_{M} &=
-\frac{m^4}{64 \pi ^2 }\left[\frac{4}{(d-4)}+{2 \log \left(\frac{m^2}{4\pi\mu^2}\right)+2 \gamma -3}{} \right],
\\
\delta a^{(0)}_{\Sigma}&=
-\frac{\lambda(6 m^2-\lambda ^2)}{96 \pi ^2 }
\Bigg[ \frac{2}{{(d-4)}}+2\gamma-\frac{8}{3}+ \log \left(\frac{m^2}{\pi\mu ^2}\right) \Bigg],
\\
\delta a^{(1)}_{\Sigma} &=
-\frac{\lambda }{16 \pi ^2 }\left[\frac{2}{(d-4)}+\gamma +\log \left(\frac{m^2}{4 \pi  \mu^2}\right)\right],
\\
\delta a^{(1)}_{M}&=
-\frac{m^2}{16 \pi ^2 } \left[\frac{2}{(d-4)}+\gamma -1+ {\log \left(\frac{m^2}{4 \pi  \mu^2}\right)}{}\right],
\\
\delta a^{(2)}_{M}&=
-\frac{1}{32 \pi ^2}\left[\frac{2}{ (d-4)}+\gamma  +\log \left(\frac{m^2}{4\pi\mu^2}\right)\right].
\end{split}
\end{align}
The renormalized parameters can be then defined as $a_{i,ren}^j:=a_{i,0}^j-\delta a_i^j$, 
where $a_{i,0}^j$ are the so-called bare parameters. 
Afterwards, one can write down a Callan--Symanzik equation involving the parameter $\mu$; 
the beta functions are as customarily defined 
as 
\begin{align}
\beta_i^j= \mu \frac{\partial a_{i,ren}^j}{\partial \mu}.
\end{align}
A direct computation, taking into account that $m$ and $\lambda$ are $\mu$-independent,
shows that in the present scheme the beta functions are\footnote{\old{Recall that we are not taking into account the quantum contributions of $\sigma$ at this point; 
the formulae in~\eqref{eq:beta_nonMS} should be understood as the contributions to the beta functions derived from the quantum fluctuations of $\phi$.}}
\begin{align}\label{eq:beta_nonMS}
 \begin{split}
\beta^{(0)}_{M} &=
\frac{m^4}{16 \pi ^2 } ,
\quad
\beta^{(0)}_{\Sigma}=
\frac{\lambda(6 m^2-\lambda ^2)}{48 \pi ^2 },
\quad
\beta^{(1)}_{\Sigma} =
\frac{\lambda }{8 \pi ^2 } ,
\\
\beta^{(1)}_{M}&=
\frac{m^2}{8 \pi ^2 } ,
\quad
\beta^{(2)}_{M}=
\frac{1}{16 \pi ^2}.  
 \end{split}
\end{align}
}

\subsection{{An alternative renormalization scheme}}
\old{In this subsection we will use a more physical scheme}, in spirit similar to the discussions in 
Refs.~\cite{Gorbar:2022xvb, Franchino-Vinas:2021lbl, Franchino-Vinas:2019upg, Franchino-Vinas:2018gzr, Donoghue:2015nba, Asorey:2003uf, Gorbar:2003yt, Gorbar:2002pw},
noting that 
the functions $C^{(1)}_{\Sigma}$ and $C^{(2)}_{M}$ play the role of running coupling constants. 
Indeed, suppose that we measure the bulk quartic coupling of $\sigma$ to have the value $c_2$ at a scale $q^2$.
Then its value at another scale $k^2$ will simply be given by $c_2+C^{(2)}_{M}(k^2,m)-C^{(2)}_{M}(q^2,m)$,
showing that our assertion is true (up to an experimentally determined constant). 
Additionally, the corresponding beta function reads
\begin{align}\label{eq:beta_m2}
 {\beta'}_M^{(2)}= \frac{\partial C^{(2)}_M(q^2,m)}{\partial \log q}
 =\frac{1}{32 \pi ^2}-\frac{m^2 \text{arccsch}\left(2 \sqrt{\frac{m^2}{q^2}}\right)}{8 \pi ^2 \sqrt{q^2 \left(4 m^2+q^2\right)}}.
\end{align}

\begin{figure}[h!]
\begin{center}
 \begin{minipage}{0.46\textwidth}
 \vspace{0.0cm}\includegraphics[width=.9\textwidth]{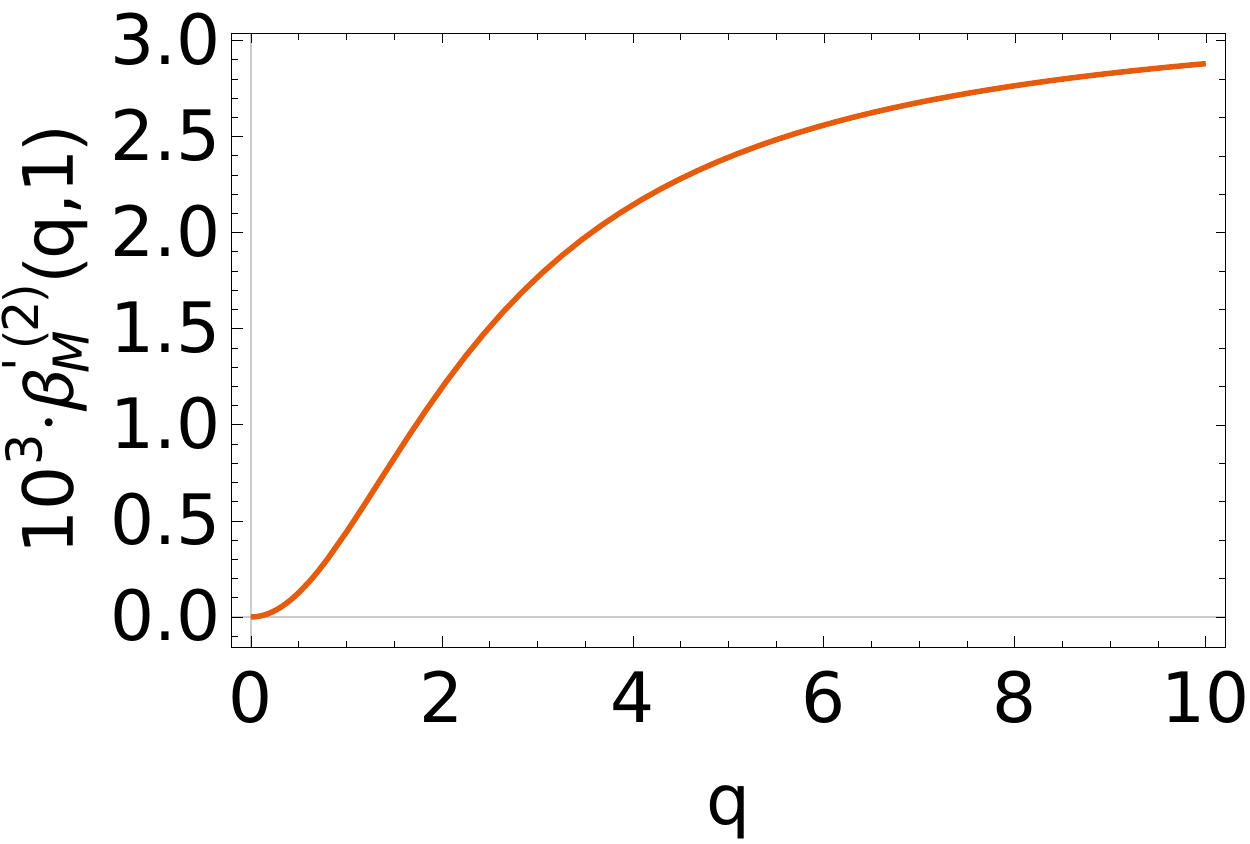} 
 \end{minipage}
\hspace{0.02\textwidth}
\begin{minipage}{0.46\textwidth}
 \vspace{0.0cm}\includegraphics[width=1.0\textwidth]{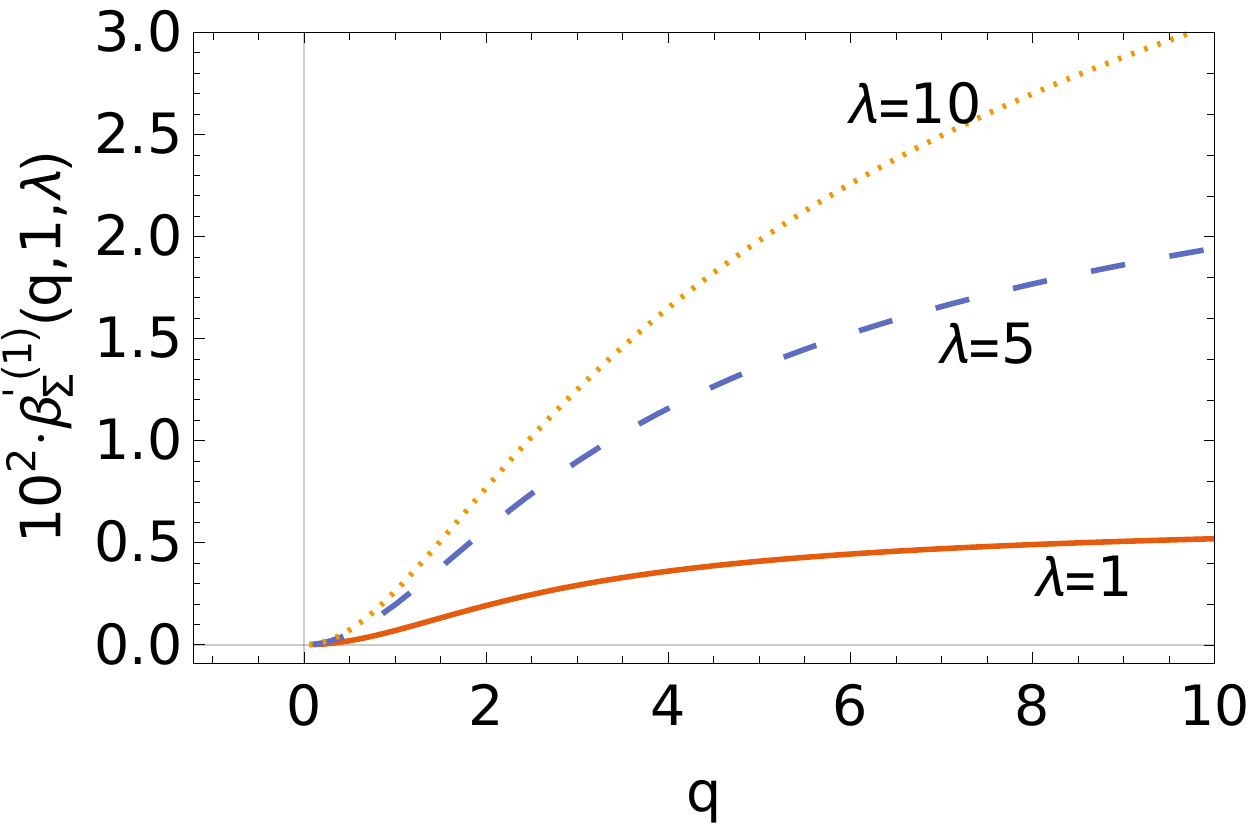} 
 \end{minipage}
 \end{center}
\caption{Rescaled beta functions $10^3\times {\beta'}^{(2)}_M(q,1)$ (left panel) and $10^2\times {\beta'}^{(1)}_\Sigma(q,1,\lambda)$ (right panel) as functions of the momentum scale $q$
(in arbitrary units). 
We have set the mass to unity; in the right panel, the different curves, from bottom to top, correspond to the values $\lambda=1$, 5 and 10 (in arbitrary units).
}
\label{fig:beta_functions}
 \end{figure}

The situation is slightly different for $C^{(1)}_{\Sigma}$,
inasmuch as the relevant scale for its running is determined by the momentum perpendicular to the plate. 
On the one side, this means that our renormalization breaks Lorentz invariance, 
which was anyway already broken by the plate configuration.
On the other side, this is not a consequence of having neglected in Eq.~\eqref{eq:ea_form_factors} terms involving 
partial derivatives with respect to the parallel directions: 
for fields vanishing at infinity, their contributions \old{to $C_{\Sigma}^{(1)}$} 
will be boundary terms that simply vanish\footnote{\old{In general, this is no longer true for terms with higher powers of $\sigma$, such as the contribution $F_\Sigma^{(2)}$ in Eq.~\eqref{eq:ea_form_factors}.}}.\old{
Analogously to the previous calculation, we can compute the beta function for $a_\Sigma^{(1)}$ in this scheme:
\begin{align}\label{eq:beta_s1}
 \begin{split}
{\beta'}_\Sigma^{(1)}&= \frac{\partial C^{(1)}_\Sigma(q,m,\lambda)}{\partial \log q}
 \\
 &=\frac{\lambda  q^2}{16 \pi ^2 \left(\lambda ^2+q^2\right)}-\frac{\lambda ^2 q^2 \left(-\lambda ^2+8 m^2+q^2\right)}{32 \pi  \sqrt{4 m^2+q^2} \left(\lambda ^2+q^2\right)^2}
 \\
 &\hspace{0.5cm}+\frac{\lambda  \sqrt{q^2} \left(\lambda ^2 \left(2 m^2+q^2\right)-2 m^2 q^2\right) }{8 \pi ^2 \sqrt{4 m^2+q^2} \left(\lambda ^2+q^2\right)^2} \operatorname{arctanh}\left(\sqrt{\frac{q^2}{4 m^2+q^2}}\right)
 \\
 &\hspace{1.5cm}-\frac{\lambda ^2 q^2  \sqrt{\left(\lambda ^2-4 m^2\right) }}{8 \pi ^2 \left(\lambda ^2+q^2\right)^2} \operatorname{arccoth}\left(\frac{\lambda }{\sqrt{\lambda ^2-4 m^2}}\right).
\end{split}
\end{align}
}

\old{It is important to notice that, as a consequence of discussion in  the \hidepdo{previous}{precedent} paragraphs,}we are lead to the conclusion that
$C^{(1)}_{\Sigma}$ and $C^{(2)}_{M}$ do have physical meaning,
differently to the case of $C^{(0)}_{\Sigma}$\old{(at least, as long as we do not include 
a dynamical gravity).}The quantum field $\phi$ acts as a mediator between the boundary condition and the background $\sigma$,
such that, if we try to confine $\phi$, then we are also automatically  enforcing $\sigma$ to satisfy a boundary condition.
The nature of the latter is encoded in the form factor $F_{\Sigma}^{(1)}$. 

As a last comment, it is interesting to study the asymptotic expansion of the running coupling constants for large and small $k$ (or masses), to wit:
\begin{align}\label{eq:C1_expansion}
&C^{(1)}_{\Sigma}(k,m,\lambda) =
\begin{cases}
\frac{\lambda  }{32 \pi ^2}\bigg[ \log \left(\frac{k^2}{m^2}\right)-2  {+\pi   \lambda (k^2)^{-\frac{1}{2}}- \left(\lambda ^2-2 m^2\right)\log \left(\frac{k^2}{m^2}\right)  k^{-2}}
\\
\hu+\left(2 \lambda  \sqrt{\lambda ^2-4 m^2} \operatorname{arccoth}\left(\frac{\lambda }{\sqrt{\lambda ^2-4 m^2}}\right)+2 m^2\right) {k^{-2}}
\\
\hu\hu+O\left({(k^2)^{-3/2}}\right) \bigg],\quad  k^2 \gg m^2,
\\
\frac{\lambda ^2}{128 \pi  m}
\bigg[ 
1+\frac{2 \left(k^2-\lambda ^2\right)}{3 \pi  \lambda  m}+\frac{\lambda ^2-k^2}{16 m^2}
\\
\hspace{2cm}-\frac{k^4+\lambda ^4-\lambda ^2 k^2}{15 \pi  \lambda  m^3}
+\mathcal{O}\left(m^{-4}\right)
\bigg],\quad k^2 \ll m^2,
\end{cases}
\end{align}
\begin{align}\label{eq:C2_expansion}
 C^{(2)}_M(k^2,m)
 =
 \begin{cases}
\frac{1}{64 \pi ^2} \bigg[\left(\log \left(\frac{k^2}{m^2}\right)-2\right)+2 m^2 \left(\log \left(\frac{k^2}{m^2}\right)+1\right) {k^{-2}}
\\
\hu+{m^4 \left(-2 \log \left(\frac{k^2}{m^2}\right)+1\right)}{k^{-4}}+\mathcal{O}\left({k^{-6}}\right)
\bigg]
,\quad k^2\gg m^2,
\\
\frac{k^2}{384 \pi ^2 m^2}\left[1-\frac{k^2}{10 m^2}+\frac{k^4}{70 m^4}-\frac{17 k^6}{2240 m^6}+\mathcal{O}\left(m^{-8}\right)\right],\quad  k^2\ll m^2
 \end{cases}
\end{align}
For large masses,  they display the decoupling of the quantum field, analogue to the Appelquist and Carazzone result for QED~\cite{Appelquist:1974tg}
and generalizing the result for a Yukawa coupling without boundaries~\cite{Ferreiro:2021pec}.
Additionally, the coefficients in the expansion are local. Instead, for large $k^2$ we see the nonlocal character of the runnings.
In particular, we observe that both couplings diverge for $k\to \infty$. 
This implies that for situations involving high energy processes in the $d$th direction,
at the one-loop level,
the $\sigma$ field will have to obey a strong boundary condition on $\Sigma$, i.e. almost Dirichlet.\old{
One can see that the plots in Fig.~\ref{fig:beta_functions}, which display the behaviours of the beta functions with the momentum scale $q$ and the coupling $\lambda$, 
are in agreement with the \hidepdo{preceding}{precedent} description.
}



\hidepdo{}{
\subsection{Formal aspects of the form factors}\label{sec:psido}
We will close this section with a short digression on a more mathematical description of the form factors.
The ultimate goal will be to evidence one fact that is usually left aside in the literature:
in general, form factors can be formally discussed in the frame of pseudo-differential operators.
To begin, let us recall the following definitions borrowed from Ref.~\cite{Hoermander:LPDOIII}.

\begin{definition}[Symbols]
Let $m \in \mathbb{R}$ and $n\in \mathbb{N}$. 
The class of symbols $S^{m}(\mathbb{R}^{n}\times \mathbb{R}^n)$ 
consists of functions\footnote{$C^{\infty}$ denotes the space of infinitely differentiable functions.} (symbols) $a(x,\xi)\in C^{\infty}(\mathbb{R}^n\times \mathbb{R}^n)$ 
such that, for all multi-indices $\alpha$ and $\beta$, a constant $C_{\alpha,\beta}$ exists for which
\begin{align}\label{eq:def_symbol}
 | \partial_\xi^\alpha \partial^\beta_x a(x,\xi) | \leq C_{\alpha,\beta} (1+|\xi|)^{m-|\alpha| }, \quad x,\xi\in \mathbb{R}^n.
\end{align}

\end{definition}

 \begin{definition}[Pseudo-differential operator]\label{def:psido}
 If $a\in S^m$ and\footnote{$\mathscr{S}$ is Schwartz's space of functions whose derivatives are rapidly decreasing.} $u\in \mathscr{S}$, then 
 \begin{align}\label{eq:def_psido}
  a(x, D) u(x) =(2\pi)^{-n} \int \dx[\xi]\,  e^{\mathi\, x \cdot \xi} \, a(x,\xi)\, \tilde u(\xi)
 \end{align}
defines a function $a(x,D)u\in \mathscr{S}$. One calls $a(x,D)$ a pseudo-differential operator of order $m$.
\end{definition}

\begin{proposition}
\old{Consider couplings $\lambda,m>0$.} Then the form factors $C_{\Sigma}^{(1)},C^{(2)}_{M}\in S^{j}$ for any $j>0$,
and therefore can be used to define pseudo-differential operators.  
\end{proposition}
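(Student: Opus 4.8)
The plan is to verify the defining estimates \eqref{eq:def_symbol} directly, treating the perpendicular momentum $k$ as the covariable $\xi$ (so $n=1$). A first observation that simplifies matters is that both $C^{(1)}_{\Sigma}$ and $C^{(2)}_{M}$ depend on the single variable $k$ and not on any spatial variable $x$; hence every instance of \eqref{eq:def_symbol} with $\beta\neq 0$ is trivially satisfied, since the left-hand side vanishes, and the task reduces to controlling the pure $\xi$-derivatives $\partial_\xi^\alpha$. A second simplification is that each form factor depends on $k$ only through $k^2$, so both are even and it suffices to work on $[0,\infty)$.

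First I would establish that $C^{(1)}_{\Sigma}$ and $C^{(2)}_{M}$ are genuine $C^\infty(\mathbb{R})$ functions. The only points where smoothness could fail are $k=0$, where the branch points of $\operatorname{arcsinh}$, $\operatorname{arctanh}$ and the square-root prefactors sit, and, for $C^{(1)}_{\Sigma}$, the zeros of the prefactor $k^2+\lambda^2$. The latter lie off the real axis precisely because $\lambda>0$, so no real singularity occurs; this is one place where the standing assumption $\lambda,m>0$ is essential. That the apparent singularities at $k=0$ are removable to all orders follows from the convergent small-$k$ expansions recorded in the second branches of \eqref{eq:C1_expansion} and \eqref{eq:C2_expansion}, which exhibit both functions as real-analytic series in $k^2$ near the origin.

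The heart of the argument is then the behaviour for large $|k|$. The leading asymptotics displayed in the first branches of \eqref{eq:C1_expansion} and \eqref{eq:C2_expansion} show that both form factors grow only like $\log|k|$; since $\log(2+|k|)\le C_j(1+|k|)^{j}$ for every $j>0$, the $\alpha=0$ case of \eqref{eq:def_symbol}, with order exponent $j$, holds for all $j>0$, and it is exactly this logarithm that forbids $j=0$. For the derivatives I would argue that these asymptotic expansions may be differentiated term by term, so that each application of $\partial_\xi$ lowers the order by one power of $|\xi|$ up to the harmless logarithm, giving $\partial_\xi^\alpha C = O(|\xi|^{-\alpha}\log|\xi|)$ as $|\xi|\to\infty$. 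Bounding $|\xi|^{-\alpha}\log|\xi|\le C_{j,\alpha}(1+|\xi|)^{j-\alpha}$ and combining with the uniform boundedness on compact sets, guaranteed by the smoothness just established, yields the full family of estimates \eqref{eq:def_symbol} of order $j$, for every $j>0$.

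The main obstacle I anticipate is the derivative step: one must show that differentiation does not spoil the gain of a power, that is, that the $1/|k|$ improvement visible at leading order persists to all orders $\alpha$ and uniformly down to moderate $|k|$. Concretely, differentiating the closed forms produces rational prefactors multiplying $\operatorname{arctanh}$ and $\operatorname{arcsinh}$ together with purely rational remainders, and one must check that the transcendental pieces keep contributing only $\log$-type growth while the rational pieces decay at the required rate. The cleanest route is to justify term-by-term differentiation of the polyhomogeneous expansions rather than to differentiate the elementary closed forms by hand.
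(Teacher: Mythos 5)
Your proposal follows essentially the same strategy as the paper: observe that the form factors have no $x$ dependence so only the $\xi$-derivatives matter, show smoothness at $k=0$ by Taylor expansion of the apparent branch-point singularities (using $\lambda,m>0$ to keep the algebraic prefactors nonsingular), and use the $\log(k^2)$ growth at infinity to conclude membership in $S^j$ for every $j>0$ but not $j=0$. The one place where you diverge is the step you yourself flag as the main obstacle, the control of higher $\xi$-derivatives: you propose justifying term-by-term differentiation of the polyhomogeneous large-$k$ expansions, which as stated is a genuine gap (an $O(\log|k|)$ bound on a function does not by itself yield $O(|k|^{-\alpha}\log|k|)$ bounds on its derivatives). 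The paper closes this more cheaply with an elementary observation: after a \emph{single} differentiation the inverse hyperbolic functions ($\operatorname{arctanh}$, $\operatorname{arccoth}$, $\operatorname{arcsinh}$) turn into algebraic functions of $k$, so every derivative of order $\alpha\geq 1$ is an algebraic function whose decay can be read off directly, and no asymptotic-series manipulation is needed. Incorporating that remark would turn your sketch into a complete proof along the paper's lines.
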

\begin{proof}
Notice first that $C_{\Sigma}^{(1)}$ and $C^{(2)}_{M}$ have no $x$ dependence and $k$ plays the role of $\xi$ in Definition~\ref{def:psido}.
For the terms made only of algebraic functions of $k$,
a bound as in Eq.~\eqref{eq:def_symbol} can be easily obtained; moreover, these terms are infinitely differentiable,
since as assumption we have $m,\lambda>0$.

The terms that involve inverse hyperbolic functions display a twofold complication. 
First, they seem to have a singular behaviour for $k=0$. 
This is just apparent, what can be shown using a Taylor expansion of the relevant functions around $k=0$.
Second, the corresponding  large $k$ expansion yields $\log( k^2)$ terms, see Eqs.~\eqref{eq:C1_expansion} and \eqref{eq:C2_expansion} below. 
However, after a first differentiation one obtains algebraic functions of $k$ and a bound as in \eqref{eq:def_psido} can be straightforwardly obtained. 
The fact that they belong to $C^{\infty}$ is thus also proved.
\end{proof}

Consequently, to analyze the form factors we can employ all the machinery of the theory of pseudo-differential operators. 
In particular, it is instructive to analyze under what circumstances the integrals in Eq.~\eqref{eq:ea_form_factors} are convergent.
It is known that if the symbol belongs to the class $S^0$, then the associated pseudo-differential operator is an endomorphism in 
the space $L^2(\mathbb{R}^n)$ of Lebesgue square-integrable functions, 
see \cite[Theorem 18.1.11]{Hoermander:LPDOIII}.
Although in the present case the $\log k^2$ behaviour for large $k$ prevents us from using this result, 
we can envisage two alternatives.

The first one is to stick to $L^2(\mathbb{R}^n)$ as the domain on which the form factors act, 
so that the image of the latter will have to be interpreted in terms of generalized functions. 
This has been followed for example in Ref.~\cite{Franchino-Vinas:2021lbl}.

The remaining option is to restrict further the domain, 
in order to obtain a square-integrable function after applying the form factors~\cite{Mazzitelli:2011st}.
If we follow this possibility, we can subtract to the form factors a term $\log(1+k^2)$ with an appropriate coefficient,
obtaining thus a symbol in the class $S^0$. The singular contribution is then given by a  factor $\log(1+k^2)$.
As an example, in $n=1$ a sufficient, albeit not necessary set  of conditions on $\sigma^2$ that imply $C_{\Sigma}^{(1)}\sigma^2,C^{(2)}_{M}\sigma^2\in L^2(\mathbb{R})$ 
can be obtained using the lemma of Riemann--Lebesgue\footnote{$C^2(\mathbb{R})$ is the space of functions of a real variable with continuous derivatives up to second order. 
$L^1(\mathbb{R})$ is the space of absolutely Lebesgue integrable functions of a real variable.}:
if $\sigma^2\in C^{2}(\mathbb{R})$ and $\sigma^2,(\sigma^2){'},(\sigma^2){''}\in L^1(\mathbb{R})$, then $C_{\Sigma}^{(1)}\sigma^2,C^{(2)}_{M}\sigma^2\in L^2(\mathbb{R})$ as desired.
}


\section{Conclusions and outlook}
From the mathematical point of view, we have determined an infinite number of GSDW coefficients.
Bearing in mind the universality property discussed in Remarks~\ref{comm:order0},~\ref{comm:order1} and \ref{comm:order2},
these results may be an useful guide in more involved computations.\old{We 
have also discussed, in Sec.~\ref{sec:bcs}, the connection of these computations with the GSDW of the Dirichlet and Robin problems.}

As a next step it will be interesting to consider small curvature corrections to our problem.
One could think for example in curving the plates, i.e. generating an extrinsic curvature on $\Sigma$, 
or studying totally geodesic plates in a curved manifold.
These problems may be analyzed either by expanding in powers of the curvature contributions 
or considering curved configurations in which form factors may be obtained to all order in the curvatures.
In the latter case, homogeneous spaces are probably the most natural candidate. 

From the physical point of view we have analyzed, as far as we know for the first time in the literature, 
surface form factors of a Yukawa theory and their implications.
The most immediate consequence is the emergence of an energy-dependent semitransparent 
boundary condition satisfied by the background field.

One attractive possibility is to introduce a self-interacting term in the action~\eqref{eq:action}. 
If one adds a quartic interaction, i.e. $\lambda\,\phi^4/12$, 
then Eq.~\eqref{eq:qft_operator} acquires an additional term,
which reads $\lambda \varphi^2$. 
The field $\varphi$ is the classical field~\cite{Peskin:1995ev} and it is clear that 
its contribution to the effective action can be obtained by replacing $\sigma^2\to \sigma^2+\lambda \varphi^2$.
Doing so, one can readily compare the divergent terms with the counterterms found in Ref.~\cite{Bordag:2004rx}.
By the discussion in the present manuscript, the semitransparent boundary condition satisfied by the classical field $\varphi$ 
acquires then a dependence with the energy involved in a given physical process.
However, a subtle point is that the classical field is expected to be discontinuous;
\hidepdo{in such case, 
the action of the form factors on it will have to be interpreted in terms of generalized functions
and a detailed analysis will be required.}{in such case, taking into account the discussion in Sec.~\ref{sec:psido}, 
the action of the form factors on it will have to be interpreted in terms of generalized functions
and a detailed analysis will be required.}

Natural generalizations include the study of more general physical models and boundary conditions.
These lines are currently being explored.


\section*{Acknowledgements}
S.A.F. is indebted to D.F.~Mazzitelli and D. Vassilevich
for interesting discussions; 
he is also grateful to J.~Edwards, S.~Evans and G. Schaller for their suggestions.
S.A.F. acknowledges support from Helmholtz-Zentrum Dresden-Rossendorf (HZDR), PIP 11220200101426CO Consejo
Nacional de Investigaciones Cient\'ificas y T\'ecnicas (CO\-NI\-CET) and Project 11/X748, UNLP.


\appendix
\section{Form factors in arbitrary dimensions}\label{app:form_factors_d}
The explicit expressions for the form factors appearing in the effective action of Eq.~\eqref{eq:ea_form_factors}, 
for arbitrary Euclidean spacetime dimensions $d$, read
\begin{align}
 F^{(1)}_{M}(m):&=2^{-d-1} \pi ^{-\frac{d}{2}} m^{d-2}  \Gamma \left(1-\frac{d}{2}\right),
 \\
 \begin{split}
F^{(2)}_{M}\big(k^2,m\big) :&=-2^{2-2 d} \pi ^{-\frac{d}{2}}  \Gamma \left(2-\frac{d}{2}\right) \left(k^2+4 m^2\right)^{\frac{d}{2}-2} 
 \\
 &\hspace{3cm}\times\, _2F_1\left(\frac{1}{2},2-\frac{d}{2};\frac{3}{2};\frac{k^2}{k^2+4 m^2}\right),
  \end{split}
\end{align}
\begin{align}
 \begin{split}
F^{(1)}_{\Sigma}(k,m,\lambda):&=
 -\frac{2^{2-2 d} \pi ^{\frac{(-d+1)}{2}} \lambda  }{\left(k^2+\lambda ^2\right)}\Bigg\lbrace
 \lambda  \Gamma \left(\frac{3}{2}-\frac{d}{2}\right) \left(k^2+4 m^2\right)^{\frac{d}{2}-\frac{3}{2}}
 \\
 &\hspace{-0.5cm}+\frac{2 k^2 \Gamma \left(2-\frac{d}{2}\right) \left(k^2+4 m^2\right)^{\frac{d}{2}-2} \, _2F_1\left(\frac{1}{2},2-\frac{d}{2};\frac{3}{2};\frac{k^2}{k^2+4 m^2}\right)}{\sqrt{\pi }}
 \\
 &\hspace{1.0cm}+\frac{2 \lambda ^{d-2} \Gamma \left(2-\frac{d}{2}\right) \, _2F_1\left(\frac{3-d}{2},2-\frac{d}{2};\frac{5-d}{2};1-\frac{4 m^2}{\lambda ^2}\right)}{\sqrt{\pi } (d-3)}
 \Bigg\rbrace,
 \end{split}
\end{align}
\begin{align}
\begin{split}
 &\pi^{d/2}F^{(2)}_{\Sigma}\left(k_1,k_2,m,\lambda\right) 
 \\
 &=\Bigg[ \frac{2^{3-2 d}  \lambda   \Gamma \left(2-\frac{d}{2}\right) k_1^2\left(\lambda ^2 k_1+\lambda ^2 k_2+k_1^3-k_2 k_1^2\right) }{\left(k_1^2-k_2^2\right) k_2  \left(\lambda ^2+k_1^2\right){}^2}
 \\
 &\hspace{3cm}\times \left(k_1^2+4 m^2\right){}^{\frac{d}{2}-2} \, _2F_1\left(\frac{1}{2},2-\frac{d}{2};\frac{3}{2};\frac{k_1^2}{4 m^2+k_1^2}\right)
 \\
 &\hu+\frac{4^{1-d} \pi ^{\frac{1}{2}} \lambda ^2 \Gamma \left(\frac{3}{2}-\frac{d}{2}\right) k_1 \left(\lambda ^2+k_1^2-2 k_1 k_2\right) \left(k_1^2+4 m^2\right){}^{\frac{d-3}{2}}}{\left(k_1^2-k^2_2\right) k_2  \left(\lambda ^2+k_1^2\right){}^2}+\{k_1\leftrightarrow k_2\}\Bigg]
 \\
 &-\frac{2^{3-2 d}  \lambda   \Gamma \left(2-\frac{d}{2}\right) \left(k_1+k_2\right){}^2\left[(k_1+k_2)^2+4 m^2\right]{}^{\frac{d}{2}-2} }{k_1 k_2 \left[\lambda ^2+(k_1+k_2)^2 \right]}
 \\
 &\hspace{3cm}\times \, _2F_1\left(\frac{1}{2},2-\frac{d}{2};\frac{3}{2};\frac{\left(k_1+k_2\right){}^2}{4 m^2+(k_1+k_2)^2}\right)
 \\
 &+\frac{2^{-d-1}  \lambda ^3 \left(m^2\right)^{d/2} \Gamma \left(2-\frac{d}{2}\right)}{m^4 \left(\lambda ^2+k_1^2\right) \left(\lambda ^2+k_2^2\right)}
 -\frac{2^{-d-1} \pi ^{\frac{1}{2}-\frac{d}{2}} \lambda ^2 \Gamma \left(\frac{3}{2}-\frac{d}{2}\right) \left(\frac{1}{4} \left(k_1+k_2\right){}^2+m^2\right){}^{\frac{d-3}{2}}}{k_1 k_2 \left[\lambda ^2+(k_1^2+k_2^2)\right]}
 \\
 &-\frac{2^{4-2 d}   \Gamma \left(3-\frac{d}{2}\right) \lambda ^{d-1} }{(5-d) \left(\lambda ^2+k_1^2\right) \left(\lambda ^2+k_2^2\right)}\, _2F_1\left(\frac{4-d}{2}+1,\frac{5-d}{2};\frac{5-d}{2}+1;1-\frac{4 m^2}{\lambda ^2}\right)
 \\
 &+\frac{2^{3-2 d}  \Gamma \left(2-\frac{d}{2}\right)\left(5 \lambda ^4+k_1^2 \lambda ^2+k_2^2 \lambda ^2-4 k_1 k_2 \lambda ^2-2 k_1 k_2^3-3 k_1^2 k_2^2-2 k_1^3 k_2\right)}{(3-d) \left(\lambda ^2+k_1^2\right){}^2 \left(\lambda ^2+k_2^2\right){}^2 \left(\lambda ^2+k_1^2+k_2^2+2 k_1 k_2\right)}
 \\
 &\hspace{3cm}\times   \lambda ^{d-1} \,  k_1 k_2 \, {}_2F_1\left(\frac{2-d}{2}+1,\frac{3-d}{2};\frac{3-d}{2}+1;1-\frac{4 m^2}{\lambda ^2}\right).
\end{split}
\end{align}

\printbibliography

\end{document}